\documentclass[journal]{IEEEtran}

\IEEEoverridecommandlockouts

\usepackage{relsize}
\usepackage{moreverb}
\usepackage{mathrsfs}
\usepackage{amsmath}
\usepackage{amssymb}
\usepackage{stfloats}
\usepackage{graphicx}
\usepackage{makecell}
\usepackage{xcolor}
\usepackage{cite}

\usepackage{algorithm}
\usepackage{algorithmicx}
\usepackage{algpseudocode}

\usepackage{multirow}       
\usepackage{booktabs}
\usepackage{multirow}
\usepackage{tabularx,ragged2e,booktabs,caption}

\ifCLASSOPTIONcompsoc
\usepackage[tight,normalsize,sf,SF]{subfigure}
\else
\usepackage[tight,footnotesize]{subfigure}
\usepackage{subfigure}

\usepackage{amsmath}
\usepackage{amsthm}

\newtheorem{theorem}{Theorem}

\hyphenation{op-tical net-works semi-conduc-tor}

\usepackage{setspace}
\usepackage[hidelinks]{hyperref}
\usepackage{graphicx}
\usepackage{flushend}
\allowdisplaybreaks[2]

\begin{document}
	
	\title{\huge{Fairness vs. Equality: RSMA-Based Multi-Target and Multi-User Integrated Sensing and Communications}
	\thanks{This work was supported by National Key Laboratory of Unmanned Aerial Vehicle Technology in NPU (Grant No. WR202404), Fundamental Research Funds for the Central Universities (Grant No. G2024WD0159, D5000240239), 
	National Key Laboratory Fund Project for Space Microwave Communication (Grant No. HTKI2024KL504010), 
	2023 Zhongdian Tian’ao Innovation Theory and Technology Group Fund (Grant No. 2023JSQ0101), 
	and Guangdong Basic and Applied Basic Research Foundation (No. 2021A1515110077). 
	The work of A.-A. A. Boulogeorgos is supported by MINOAS. His research project MINOAS is implemented in the framework of H.F.R.I call “Basic research Financing (Horizontal support of all Sciences)” under the National Recovery and Resilience Plan “Greece 2.0” funded by the European Union –NextGenerationEU (H.F.R.I. Project Number: 15857).” \emph{(Corresponding author: Rugui Yao.)}}
	%
	
	\thanks{Xudong Li is with the China Ship Development and Design Center, Wuhan 430064, China (e-mail: xudong\_good@mail.nwpu.edu.cn).}
	\thanks{Rugui Yao is with the School of Electronics and Information, Northwestern Polytechnical University, Xi’an 710072, China (e-mail: yaorg@nwpu.edu.cn).}
	\thanks{Alexandros-Apostolos A. Boulogeorgos is with the Department of Electrical and Computer Engineering, University of Western Macedonia, Kozani 50100, Greece (email: al.boulogeorgos@ieee.org).}	
	\thanks{Theodoros A. Tsiftsis is with Department of Informatics Telecommunications, University of Thessaly, Lamia 35100, Greece, and also with the Department of Electrical and Electronic Engineering, University of Nottingham Ningbo China, Ningbo 315100, China (e-mail: tsiftsis@uth.gr).}

}

{\Large{\author{\IEEEauthorblockN{Xudong Li, Rugui Yao,~\IEEEmembership{Senior Member,~IEEE}, 
				Alexandros-Apostolos A. Boulogeorgos,~\IEEEmembership{Senior Member,~IEEE},
				and Theodoros A. Tsiftsis,~\IEEEmembership{Senior Member,~IEEE} }}}}

	\maketitle
	
\begin{abstract}
Equality lies in undifferentiated allocation, disregarding target disparities or contextual needs, while fairness centers on differentiated compensation, explicitly addressing systemic disadvantages faced by marginalized targets. Existing works on multi-objective sensing optimization in ISAC systems have predominantly emphasized equality, but fairness concerns remain inadequately addressed.
This paper investigates the tradeoff between sensing and communication in an \emph{integrated sensing and communications} (ISAC) system comprising multiple sensing targets and communication users. A dual-functional base station conducts downlink data transmission services based on \emph{rate-splitting multiple access} (RSMA) for multiple users, while sensing surrounding multiple targets. To enable effective multicast communications and ensure fair and balanced multi-target sensing and under a constrained power budget, we propose a multi-target sensing enhancement scheme incorporating fairness-aware \emph{beamforming} (BF), common rate splitting, and sensing power allocation. The proposed scheme minimizes the sensing \emph{Cramér-Rao bound} (CRB), while maximizing communication rate demands. Specifically, we derive closed-form expressions for both sensing CRB and communication rates. Building upon them, we formulate an optimization problem aiming to minimize the sensing CRB, while maximizing the communication rates. Considering the non-convex nature of the original optimization problem poses significant computational challenges, we transform the tradeoff optimization into a Pareto-optimal  problem by employing Taylor series expansion, semi-definite relaxation, successive convex approximation, and penalty function to transform the non-convex problem and associated constraints into tractable forms. Extensive simulations validate the theoretical analysis and demonstrate significant advantages of the proposed RSMA-based fairness-aware BF over non-orthogonal multiple access, space division multiple access, and orthogonal multiple access, through comprehensive comparisons in two key aspects: CRB performance improvement and sensing-communication tradeoff characteristics. The proposed optimization framework exhibits remarkable superiority in enhancing both sensing accuracy and communication quality for ISAC systems.
\end{abstract}

\begin{IEEEkeywords}
	Cramér-Rao bound, fairness-aware beamforming, integrated sensing and communications, rate-splitting multiple access.
\end{IEEEkeywords}

\IEEEpeerreviewmaketitle

\section{Introduction}

\subsection{Background}
\IEEEPARstart{B}{y} 2030, \emph{sixth generation} (6G) wireless networks will enable holographic communication \cite{holo} and digital twins, facilitating the comprehensive interaction between massive perceptive and integrated agents, bridging the virtual and physical worlds \cite{6G_ISAC}. 
In urban traffic, drones, robot clusters, and other practical application scenarios, the mode of information interaction is no longer limited to information transmission, but extends to information sensing and information calculation, and target positioning, environment sensing and communication capabilities have evolved into fundamental requirements that cannot be ignored. Therefore, \emph{integrated sensing and communication} (ISAC) has been incorporated into the technical field of 6G network \cite{33_ISAC_re}. By harmonizing sensing and communication functionalities within a unified framework, ISAC promises enhanced spectrum efficiency \cite{5_ISAC,7_ISAC}, reduced hardware costs, and synergistic performance gains. However, in scenarios involving multiple communication users and diverse sensing targets \cite{LXD_GREAT_all}, critical challenges emerge: Resource competition between sensing waveform design and multi-user interference management often leads to performance tradeoffs.
Given these problems, ISAC becomes an emerging technology with important fundamental research and applicability value \cite{Survey_ISCC,Survey_secure_ISAC}, motivating an attention steering towards ISAC performance analysis and improvement researches. 

\subsection{Literature Review}
The authors of \cite{8_ISAC} analyzed the interplay between sensing and communication in the ISACs, explored multiple performance tradeoffs, and identified the potential integration of ISAC with other emerging communication technologies. In \cite{10_ISAC}, a closed-form expression was obtained for data rate triggered sensing-control pattern activation design, where both data rate requirement in millimeter wave/terahertz communications and motion control performance of \emph{unmanned aerial vehicle} (UAV) were guaranteed to realize the data transmission from a UAV to a ground \emph{base station} (BS). 
The authors of \cite{12_ISAC} quantitatively described the performance limitations and tradeoffs between sensing and communication in distributed ISAC networks with transmit power and bandwidth budgets given. In \cite{13_ISAC_0325}, the authors developed an ISAC framework based on the Markov decision processes and the deep reinforcement learning to optimize \emph{beamforming} (BF) to improve sensing and communication performance in the dynamic and uncertain environment. A full-duplex ISAC scheme was presented in \cite{14_ISAC}, which utilized the waiting time of the traditional pulse radar to transmit communication signals, improve the communication rate, and overcome problems of sensing overlap and near-target blind distance.
All the aforementioned works agree on the fact that in conventional ISAC systems sensing accuracy and effective provision of communication services were seriously hindered by multiplicative fading, unprocessed mutual interference, and the inadaptability of static scheduling modes in changeable and dynamic scenarios.

In view of the above challenges, scholars and experts have made notable contributions to the interference management domain. In more detail, in \cite{15_ISAC}, a \emph{space division multiple access} (SDMA) scheme was designed to suppress interference, which employs a linear precoding to distinguish users in the spatial domain, relying entirely on treating any residual multi-user interference as noise. The authors of \cite{16_ISAC} investigated SDMA-based communication performance gain of the ISAC system. In \cite{OMA_ISAC},  the authors evaluated the performance of an \emph{orthogonal multiple access} (OMA)-based Semi-ISaC network.
In contrast to SDMA and OMA, \emph{non-orthogonal multiple access} (NOMA) operation principle lies on stacking encoding at the transmitting end and \emph{successive interference cancellation} (SIC) encoding at the receiving end. In this scheme, users are superimposed in the power domain, and the users with better channel conditions are forced to fully decode and eliminate interference generated by other users through user grouping and sorting \cite{15_ISAC}. 
Given the joint optimization of sensing and communication based on the NOMA, a NOMA-ISAC solution for uplink transmission was given in \cite{17_ISAC} to reduce the mutual interference between radar signals and communication signals. 
The authors of \cite{19_ISAC} presented a joint optimization scheme of the BF, NOMA transmission duration, and target sensing scheduling to maximize the sensing efficiency of ISAC systems, while ensuring a high-level communication quality. An ISAC iterative channel estimation method was articulated in \cite{20_ISAC}, which realized uplink transmission through multi-channel estimation of received non-orthogonal communication signals and sensing signals, and enhanced the spectrum efficiency while both of sensing performance and communication performance were ensured. The authors of \cite{21_ISAC} reported a scheme based on the BF to maximize the weighted sum of the communication throughput and effective sensing power, so as to jointly enhance the sensing performance and communication performance of the NOMA-ISAC system. An ISAC scheme based on the multi-domain NOMA were designed in \cite{30_ISAC}, which transmitted data streams in parallel and non-orthogonally in the time-frequency domain and the delayed Doppler domain with potential targets being detected.

Though several studies on SDMA, OMA, and NOMA that are employed to improve the ISAC system performance are carried out, it is emphasized that these conventional \emph{multiple access} (MA) architectures represented by SDMA, OMA, and NOMA fall into extreme situations. Specifically, SDMA completely regards interference as noise, and thus the reliability of ISAC system seriously deteriorates, and OMA exhibits inherent limitations in balancing heterogeneous \emph{quality-of-service} (QoS) requirements. On the contrary, NOMA decodes interference one by one, which implies that the effectiveness of the ISAC system is hard to guarantee. Therefore, the above works have the sub-optimal interference management performances in the ISAC system. 

Compared with SDMA, OMA, and NOMA, recent advances in \emph{rate-splitting multiple access} (RSMA) offer a paradigm shift. By decomposing user messages into common and private streams, RSMA enables flexible interference management through superposition coding and SIC, fully absorbing the respective advantages of SDMA, OMA, and NOMA. Then, both high reliability and high effectiveness are achieved \cite{6_ISAC}. 
Considering the constraints of rate requirements and transmitting power budget, the authors of \cite{24_ISAC} designed the RSMA structure and parameters to minimize the CRB of the sensing response matrix at the radar receiver, and optimize the sensing performance under the constraint of demand-satisfying communication service.
An indicative  example of an RSMA assisted ISAC waveform design was documented in \cite{25_ISAC}, which aimed to jointly minimize the CRB of the target detection and maximize the minimum fairness rate among communication users under the power constraints of a single transmitting antenna. Considering the power consumption of low-resolution digital-to-analog converters on each \emph{radio frequency} (RF) chain under communication performance and sensing performance constraints, the authors of \cite{26_ISAC} found the optimal numbers of pre-encoders and active RF chains to maximize the energy efficiency of an RSMA-ISAC system. An ISAC system based on uplink RSMA was presented in \cite{27_ISAC}. With the transmit BF and the receive BF jointly optimized, radar sensing performance was improved to the maximum extent while the communication throughput requirement of each user, and the transmitting power constraints of BS and communication users were met. In the scenario of a multi-antenna ISAC satellite system, under the constraints of QoS and the transmitting power budget per feed, RSMA scheme combined with the BF optimization was taken into consideration to fulfill a higher sensing accuracy \cite{28_ISAC}. 

The authors of \cite{RSMA_ISAC_Sensing_N} introduced general RSMA-assisted ISAC architecture with multiple sensing targets, where the sensing CRB was minimized as well as the minimum fairness rate was maximized jointly. In \cite{MT_ISAC1},  the performance tradeoff for a multi-target ISAC was studied. The authors of \cite{MT_ISAC_2} documented a power allocation algorithm to maximize the sensing SINR while ensuring minimum
communication requirements. In \cite{MT_ISAC_3}, the waveform design
problem in a downlink multi-user and multi-target ISAC system
under different communication-sensing performance preferences. Ouyang \emph{et al.} presented a BF method that sensed multiple
moving targets, while communicating with multiple users \cite{MT_ISAC_4}. The authors of \cite{MT_ISAC_5}  investigated transmit BF for MIMO-ISAC systems in scenarios with multiple radar targets and communication users.


\subsection{Motivation and Contribution}
Recent advances in the RSMA-assisted ISAC have demonstrated promising results in balancing sensing and communication functionalities \cite{24_ISAC,25_ISAC,26_ISAC,27_ISAC,28_ISAC,RSMA_ISAC_Sensing_N}. However, four critical limitations persist:

1) Existing studies on RSMA-assisted ISAC systems \cite{24_ISAC,25_ISAC,26_ISAC,27_ISAC,28_ISAC}  predominantly focus on single-target scenarios, thereby ignoring the inter-target CRB coupling in multi-target scenarios. This simplification leads to optimistic performance predictions.

2) Researches on RSMA-assisted ISAC systems with multiple sensing targets remain scarce \cite{RSMA_ISAC_Sensing_N}. Even in more generalized multi-target ISAC frameworks, existing works predominantly focus on aggregate sensing performance metrics, overlooking further analysis of the sensing outcomes obtained for individual targets or developing mechanisms to balance sensing accuracy across heterogeneous targets \cite{MT_ISAC1,MT_ISAC_2,MT_ISAC_3,MT_ISAC_4,MT_ISAC_5}.

3) Comparative evaluations of RSMA-assisted ISAC systems with respect to sensing performance and sensing-communication tradeoffs under different MA schemes, such as RSMA, NOMA, SDMA, and OMA remain insufficient and underexplored \cite{24_ISAC,25_ISAC,26_ISAC,27_ISAC,28_ISAC,RSMA_ISAC_Sensing_N,MT_ISAC1,MT_ISAC_2,MT_ISAC_3,MT_ISAC_4,MT_ISAC_5}.

4) Studies on RSMA-assisted ISAC systems or even more generalized ISAC systems frequently assume that the user's \emph{channel state information} (CSI) is perfect. However, in realistic environments, channel characteristics inevitably vary due to dynamic factors such as mobility, multipath fading, and interference. These conditions make the continuous and accurate acquisition of CSI particularly challenging. 

Motivated by the above, this paper proposes an RSMA-assisted, fairness-aware performance enhancement framework for multi-user multi-target ISAC systems. The framework integrates the joint optimization of beamforming, power allocation, and common rate splitting, while accounting for imperfect CSI at the \emph{dual-function base station} (DFBS), to simultaneously enhance both sensing and communication capabilities.
Pointedly, the key contributions are threefold:

\begin{enumerate} 
	\item Unified fairness-aware sensing framework: We develop a new framework for multi-target fair sensing, which can roughly judge the realisable sensing CRB of different targets through initial estimation, and narrow the differences of sensing CRBs among different targets through fairness-aware BF, sensing power allocation, and common rate splitting with the imperfect users' CSI.
	
	\item Joint sensing-communication optimization: A non-convex optimization problem is formulated to minimize the CRBs of multiple targets and maximize communication rates, subject to constraints of rates of common and private streams, power budgets, and echo signal SCNR. This formulation explicitly addresses the coupling of fair-sensing-oriented BF vectors, sensing power allocation and common rate splitting.
	
	\item Algorithmic framework with theoretical guarantees: To solve the formulated problem efficiently, a computational framework is proposed based on Taylor series expansion, successive convex approximation (SCA), semi-definite relaxation (SDR), and penalty functions. This approach transforms the non-convex problem into a tractable form while preserving theoretical convergence guarantees. Numerical results demonstrate that, compared to ISAC benchmark schemes based on NOMA, SDMA, OMA, the proposed CRB based on RSMA is reduced by at least 29.14\% while maintaining comparable or even superior communication performance.

\end{enumerate}
\begin{figure}[t]
	\centering
	\includegraphics[width=7.498cm,height=6.345cm]{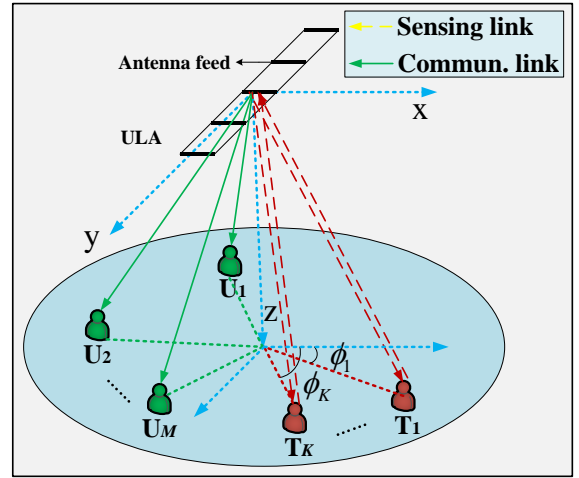}
	\caption{Considered multi-user multi-target ISAC system.}\label{fig_system_model} 
\end{figure}
\subsection{Organization and Notation}
The remainder of this paper is organized as follows. The multi-user multi-target ISAC system architecture, sensing and communication signal models and performance indicators of sensing and communications in the ISAC system are introduced in Section II. Section III formulates the optimization problem, and reports the corresponding solution. Numerical results and simulations are provided in Section IV. Finally, Section V concludes this paper by summarizing the main message and key remarks.

\textbf{Notation:} Matrices and vectors are denoted as uppercase boldface and lowercase boldface, respectively. ${\left(  \cdot  \right)^{\rm{H}}}$, ${\mathop{\rm Tr}\nolimits} \left(  \cdot  \right)$, ${\mathop{\rm rank}\nolimits} \left(  \cdot  \right)$, and ${\left\|  \cdot  \right\|_2}$ are the Hermitian transpose operation, trace operation, rank operation, and 2-norm operation. ${\mathbb{C}^{x \times y}}$ stands for the 2-dimension complex space. ${{\bf{I}}_{Z \times Z}}$ represents the $Z \times Z$ identity matrix, ${{\bf{I}}_{{Z \times 1}}}$ represents the $\left( {{Z} \times 1} \right)$-dimension unit vector, $ \otimes $ and $\odot$ denote the Kroneker and Hadamard products of two matrices, and $\mathcal{O}$ is computational complexity, respectively. Besides, the vectorization of a matrix ${\bf{A}} \in {\mathbb{R}^{m \times n}}$, denoted as vec(\textbf{A}), is a linear transformation that stacks the elements of the matrix ${\bf{A}} = \left[ {{a_{ij}}} \right]$ column-wise into an $mn \times 1$ vector. Formally, ${\mathop{\rm vec}\nolimits} \left( {\bf{A}} \right) = {\left[ {{a_1},{a_2}, \cdots ,{a_n}} \right]^{\rm{T}}} \in {\mathbb{R}^{mn \times 1}}$, where ${a_i} \in {\mathbb{R}^{m \times 1}}$ represents the $i$-th column of \textbf{A}.

\section{System and Channel Model}

\subsection{Network topology}
In the ISAC system illustrated in Fig. \ref{fig_system_model}, a DFBS equipped with a \emph{uniform linear array} (ULA) comprising $N_{\rm t}$ transmit antennas and $N_{\rm r}$ receive antennas, provides RSMA-based downlink multicast communication services to $M$ single-antenna communication users. The DFBS senses the angles and complex parameters of $K$ targets.

Herein, for communications, we assume that the DFBS possesses coarse knowledge of users, the accuracy of which is contingent on the uncertainty of the users’ CSI ascertained by the DFBS. The smaller the CSI uncertainty, the more precise the DFBS’s beampattern gets, enabling more targeted communication services under limited power budgets and thereby enhancing the communication performance of the ISAC system. On the other hand, for sensing, the DFBS can calculate the positions of $K$ targets with the estimated angles and complex parameters.

\subsection{Communication Signal Model}
Since the RSMA-based multicast communication scheme is employed, transmitted messages are partitioned into two parts. The first part is the common stream, while the second part is the private stream. Let $\left\{ {{z_{i,{\rm{c}}}},{z_{i,{\rm{p}}}}} \right\},i \in \left\{ {1,2, \cdots ,M} \right\}$ denote the common and private streams. Common stream $\left\{ {{z_{{\rm{1,c}}}},{z_{{\rm{2,c}}}}, \cdots {z_{M{\rm{,c}}}}} \right\}$ is combined and encoded as the $s_{\rm c}$ via a codebook.
Private stream $\left\{ {{z_{{\rm{1,p}}}},{z_{{\rm{2,p}}}}, \cdots {z_{M{\rm{,p}}}}} \right\}$ is encoded separately as $\left\{ {{s_1,{s_2}, \cdots ,{s_ M}}} \right\}$. Herein, signals like $s_{\rm c}$ and ${s_{ m}}$ are unrelated with the expectation and variance respectively being 0 and 1.
Then, ${{\bf{u}}_{\rm{c}}}\in {\mathbb{C}^{N_{\rm t} \times 1}}$ and ${{\bf{u}}_{{\it m}}}\in {\mathbb{C}^{N_{\rm t} \times 1}}$ represent the beamformers of the common information $s_{\rm i,c}$ and the private information of the $m$-th communication user ${s_{ m}}$, $m \in \left\{ {1,2, \cdots ,M} \right\}$, respectively. 
As a result, the signal transmitted by the DFBS can be expressed as
\begin{equation}\label{signal}
{\bf{x}}\left( t \right) = {{\bf{u}}_{\rm{c}}}{s_{\rm{c}}}\left( t \right) + \sum\nolimits_{m = 1}^M {{{\bf{u}}_m}{s_m}\left( t \right)}  = {\bf{Us}}\left( t \right) \in {^{{N_{\rm{t}}} \times 1}},
\end{equation}
where ${{\bf{U}}} \in {\mathbb{C}^{N_{\rm t} \times \left( {M + 1} \right)}}$ is the transmit BF matrix and ${{\bf{s}}} \in {\mathbb{C}^{\left( {M + 1} \right) \times 1}}$ is the transmit signal vector. The receive signal at the $m$-th communication user can be obtained as
\begin{equation}\label{received_signal}
{y_m}\left( t \right) = {\bf{h}}_m^{\rm{H}}{{\bf{u}}_{\rm{c}}}{s_{\rm{c}}}\left( t \right) + {\bf{h}}_m^{\rm{H}}\sum\nolimits_{m = 1}^M {{{\bf{u}}_m}{s_m}\left( t \right)}  + {n_m}\left( t \right),
\end{equation}
where ${{\bf{h}}_m} \in {\mathbb{C}^{{N_{\rm{t}}} \times 1}}$ is the communication channel between the DFBS and the $m$-th user, and ${n_m}\left( t \right)$ is the \emph{additive white Gaussian noise} (AWGN) at the $m$-th user with expected value equal to 0 and variance being $\sigma^2$.
For the received signals,  the common information $s_{\rm c}$ is firstly decoded with the private information regarded as interference. Then, with the aid of the SIC, the common information is re-encoded, precoded, and eliminated from received signals \cite{6_ISAC,25_ISAC}. Afterwards, the private information of the $m$-th user $s_m$ is decoded with the other users’ private information deemed as the interference. Consequently, the \emph{signal-to-interference-plus-noise ratios} (SINRs) for the common and private streams of the $m$-th user can be respectively obtained as
\begin{equation}\label{common_SINR}
{\gamma _{\rm{c}}} = \frac{{\left\| {{\bf{h}}_m^{\rm{H}}{{\bf{u}}_{\rm{c}}}} \right\|_2^2}}{{\sum\nolimits_{j = 1}^M {\left\| {{\bf{h}}_m^{\rm{H}}{{\bf{u}}_j}} \right\|_2^2}  + {\sigma ^2}}},
\end{equation}
\begin{equation}\label{private_SINR}
{\gamma _m} = \frac{{\left\| {{\bf{h}}_m^{\rm{H}}{{\bf{u}}_m}} \right\|_2^2}}{{\sum\nolimits_{j \ne m} {\left\| {{\bf{h}}_m^{\rm{H}}{{\bf{u}}_j}} \right\|_2^2}  + {\sigma ^2}}}.
\end{equation}
Then, the corresponding common and private rates are respectively given by  ${R_{\rm{c}}} = {\log _2}\left( {1 + {\gamma _{\rm{c}}}} \right)$ and ${R_m} = {\log _2}\left( {1 + {\gamma _m}} \right)$.
The sum rate of the ISAC can be expressed as \cite{25_ISAC,RSMA_ISAC_Sensing_N}
\begin{equation}\label{sum_rate}
{R_{{\rm{sum}}}} = {R_{\rm{c}}} + \sum\nolimits_{m = 1}^M {{R_m}}    = \sum\nolimits_{m = 1}^M {\left( {{r_{{\rm{c}},m}} + {R_m}} \right)} ,
\end{equation}
where ${r_{{\rm{c}},m}}$ denotes the portion of the common rate allocated to the $m$-th user, and ${{\bf{r}}_{\rm{c}}} = {\left[ {{r_{{\rm{c,1}}}},{r_{{\rm{c,2}}}}, \cdots ,{r_{{\rm{c,}}M}}} \right]^{\rm{T}}} \in {\mathbb{C}^{M \times 1}}$.
We assume imperfect CSI for all the communication users. Thus, we have ${{\bf{h}}_{\rm{}}} = {\bf{h}}_{\rm{es}}^{{\rm{}}} + {\bf{h}}_{\rm{er}}^{{\rm{}}}$,
where $ {\bf{h}}_{\rm{es}}^{{\rm{}}}$ represents estimated CSI of the DFBS-user link, and ${\bf{h}}_{\rm{er}}$ is the error between the estimated CSI and the actual CSI of the DFBS-user link. The second norm of $\mathbf{h}_{\rm{es}}$ satisfies the following inequality: 
\begin{equation}\label{CSI_2}
	0 \le \left\| {{\bf{h}}_{\rm{er}}} \right\|_2 \le {{\rm e}_{\rm{h}}},
\end{equation} 
In combination with (\ref{CSI_2}), ${\left\| {{\bf{h}}_m^{\rm{H}}{{\bf{u}}_{\rm{c}}}} \right\|_2^2}$ in (\ref{common_SINR}) can be rewritten as
\begin{equation}
{\begin{array}{*{20}{l}}
		{\left\| {{\bf{h}}_m^{\rm{H}}{{\bf{u}}_{\rm{c}}}} \right\|_2^2}
	\end{array} = {\bf{u}}_{\rm{c}}^{\rm{H}}\left( {{{\bf{h}}_{{\rm{es}}}}{\bf{h}}_{{\rm{es}}}^{\rm{H}} + {{\bf{\Theta }}_{\rm{h}}}} \right){{\bf{u}}_{\rm{c}}},}
\end{equation} 
where channel error matrix ${{{\bf{\Theta }}_{\rm{h}}}}$ is satisfied with the trigonometric inequality constraint and compatibility, implying that
\begin{equation}
{{{\left\| {{{\bf{\Theta }}_{\rm{h}}}} \right\|}_2} \le {\rm{e}}_{\rm{h}}^2 + 2{{\rm{e}}_{\rm{h}}}{{\left\| {{{\bf{h}}_{{\rm{es}}}}} \right\|}_2} = {{\rm{e}}_{{\rm{h}},{\rm{max}}}}}.
\end{equation}
Similarly, for the private streams, by inserting equation (\ref{CSI_2}) into the molecule to the right of (\ref{private_SINR}), it holds that
\begin{equation}
\begin{array}{*{20}{l}}
	{\left\| {{\bf{h}}_m^{\rm{H}}{{\bf{u}}_m}} \right\|_2^2}
\end{array} = {\bf{u}}_m^{\rm{H}}\left( {{{\bf{h}}_{{\rm{es}}}}{\bf{h}}_{{\rm{es}}}^{\rm{H}} + {{\bf{\Theta }}_{\rm{h}}}} \right){{\bf{u}}_m}.
\end{equation}
Moreover, by defining ${{\bf{H}}_m} = {{\bf{h}}_m}{\bf{h}}_m^{\rm{H}}$, ${{\bf{U}}_{\rm{c}}} = {{\bf{u}}_{\rm{c}}}{\bf{u}}_{\rm{c}}^{\rm{H}}$, and ${{\bf{U}}_m} = {{\bf{u}}_m}{\bf{u}}_m^{\rm{H}}$, we get
\begin{equation}\label{ICSI_1}
\max {\mathop{\rm Tr}\nolimits} \left( {{{\bf{H}}_m}{{\bf{U}}_{\rm{c}}}} \right) = {\mathop{\rm Tr}\nolimits} \left[ {\left( {{{\bf{H}}_{{\rm{es}}}} + {{\rm{e}}_{{\rm{h,max}}}}{{\bf{I}}_{{N_{\rm{t}}} \times {N_{\rm{t}}}}}} \right){{\bf{U}}_{\rm{c}}}} \right],
\end{equation}
\begin{equation}\label{ICSI_2}
\min {\mathop{\rm Tr}\nolimits} \left( {{{\bf{H}}_m}{{\bf{U}}_{\rm{c}}}} \right) = {\mathop{\rm Tr}\nolimits} \left[ {\left( {{{\bf{H}}_{{\rm{es}}}} - {{\rm{e}}_{{\rm{h,max}}}}{{\bf{I}}_{{N_{\rm{t}}} \times {N_{\rm{t}}}}}} \right){{\bf{U}}_{\rm{c}}}} \right],
\end{equation}
\begin{equation}\label{ICSI_3}
\max {\mathop{\rm Tr}\nolimits} \left( {{{\bf{H}}_m}{{\bf{U}}_m}} \right) = {\mathop{\rm Tr}\nolimits} \left[ {\left( {{{\bf{H}}_{{\rm{es}}}} + {{\rm{e}}_{{\rm{h,max}}}}{{\bf{I}}_{{N_{\rm{t}}} \times {N_{\rm{t}}}}}} \right){{\bf{U}}_m}} \right],
\end{equation}
\begin{equation}\label{ICSI_4}
\min {\mathop{\rm Tr}\nolimits} \left( {{{\bf{H}}_m}{{\bf{U}}_m}} \right) = {\mathop{\rm Tr}\nolimits} \left[ {\left( {{{\bf{H}}_{{\rm{es}}}} - {{\rm{e}}_{{\rm{h,max}}}}{{\bf{I}}_{{N_{\rm{t}}} \times {N_{\rm{t}}}}}} \right){{\bf{U}}_m}} \right].
\end{equation}
To enhance the accuracy, reliability, and robustness of the ISAC system in resolving, tracking, and sensing multiple targets while striking a balance between resource usage and system complexity, the communication and sensing processes within one cycle of the ISAC system are discretized by employing multiple signal transmissions and radar pulses. This design can effectively address the complex requirements in multi-target sensing scenarios. Consequently, It is assumed that there exist $T$ transmission and radar pulse blocks in a coherent processing interval, and $t \in \left\{ {1,2, \cdots ,T} \right\}$. When $T$ is large enough, the difference between the sample covariance matrix of $\textbf{X}$ and the statistical covariance matrix ${\textbf{R}_{\bf{x}}}$ approaches $\textbf{0}$, i.e.
\begin{equation}\label{covariance}
{\textbf{R}_{\bf{x}}} = {T^{ - 1}}\sum\nolimits_{t = 1}^T {{\bf{x}}\left( t \right){\bf{x}}{{\left( t \right)}^{\rm{H}}}}  = {\bf{U}}{{\bf{U}}^{\rm{H}}}.
\end{equation}

\subsection{Sensing Signal Model}
The echo signal at the DFBS can be represented as
\begin{equation}\label{echo}
{{\bf{y}}_{{\rm{BS}}}} = \sum\nolimits_{k = 1}^K {\sqrt {{o_k}} {p_k}{{\bf{q}}_{\rm{r}}}\left( {{\varphi _k}} \right){\bf{q}}_{\rm{t}}^{\rm{H}}\left( {{\varphi _k}} \right)} {\bf{x}}\left( t \right) + {\bf{n}}\left( t \right),
\end{equation}
where $o_k$ denotes the weighting coefficient for the sensing power allocated to the $k$-th target, $p_k$ is the complex coefficient of the 
$k$-th target, of which the amplitude characterizes the round-trip path loss and is proportional to its \emph{radar cross section} (RCS) with $\mathbb{E}\left( {{{\left| {{p_k}} \right|}^2}} \right) = {{\rm{p}}_0}$, $\phi_k$ is the $k$-th target's angle of departure, ${{\bf{q}}_{\rm{t}}}\left( {{\phi _k}} \right) \in {\mathbb{C}^{{N_{\rm{t}}} \times 1}}$ and ${{\bf{q}}_{\rm{r}}}\left( {{\phi _k}} \right) \in {\mathbb{C}^{{N_{\rm{t}}} \times 1}}$ are the transmit and receive steering vectors with respect to $\phi_k$, and ${\bf{n}}\left( t \right) \in \mathbb{CN} \left( {{\bf{0}},{\sigma ^2}{{\bf{I}}_{{N_{\rm{r}}} \times 1}}} \right)$ represents  the noise at the DFBS. We model the clutter from the environment as complex random Gaussian noise with a mean being equal to 0 and a variance being equal to  $\sigma^2$. The \emph{signal-to-cluster-plus-noise ratio} (SCNR) of the $k$-th user's echo signal at the DFBS can be obtained as
\begin{equation}\label{gamma_echo}
\begin{array}{l} \displaystyle
	{\gamma _k} = \frac{{{o_k}{{\rm{p}}_0}{\rm{Tr}}\left[ {{{\bf{q}}_{\rm{r}}}\left( {{\varphi _k}} \right){\bf{q}}_{\rm{t}}^{\rm{H}}\left( {{\varphi _k}} \right){{\bf{q}}_{\rm{t}}}\left( {{\varphi _k}} \right){\bf{q}}_{\rm{r}}^{\rm{H}}\left( {{\varphi _k}} \right)} \right]}}{{{\sigma ^2}}} 
	= {o_k}{\gamma _{o,k}}, 
\end{array}
\end{equation}
where ${\bf{o}} = {\left\{ {{o_1},{o_2}, \cdots ,{o_K}} \right\}^{\rm{T}}} \in {\mathbb{C}^{K \times 1}}$ is the sensing power allocation vector.
Let ${{\bf{Y}}_{{\rm{BS}}}} = \left[ {{{\bf{y}}_{{\rm{BS}}}}\left( 1 \right),{{\bf{y}}_{{\rm{BS}}}}\left( 2 \right), \cdots ,{{\bf{y}}_{{\rm{BS}}}}\left( T \right)} \right] \in  {\mathbb{C}^{{N_{\rm{r}}} \times T}}$, ${{\bf{Q}}_{\rm{t}}} = \left[ {{{\bf{q}}_{\rm{t}}}\left( {{\varphi _1}} \right),{{\bf{q}}_{\rm{t}}}\left( {{\varphi _2}} \right), \cdots ,{{\bf{q}}_{\rm{t}}}\left( {{\varphi _K}} \right)} \right] \in {\mathbb{C}^{{N_{\rm{t}}} \times K}}$, ${{\bf{Q}}_{\rm{r}}} = \left[ {{{\bf{q}}_{\rm{r}}}\left( {{\varphi _1}} \right),{{\bf{q}}_{\rm{r}}}\left( {{\varphi _2}} \right), \cdots ,{{\bf{q}}_{\rm{r}}}\left( {{\varphi _K}} \right)} \right] \in  {\mathbb{C}^{{N_{\rm{r}}} \times K}}$, ${\bf{P}} = {\mathop{\rm diag}\nolimits} \left( {{p_1},{p_2}, \cdots ,{p_K}} \right) \in {\mathbb{C}^{K \times K}}$, and ${\bf{N}} = \left[ {{\bf{n}}\left( 1 \right),{\bf{n}}\left( 2 \right), \cdots ,{\bf{n}}\left( T \right)} \right] \in {\mathbb{C}^{{N_{\rm{r}}} \times T}}$, then (\ref{echo}) can be rewritten as
\begin{equation}\label{echo2}
{{\bf{Y}}_{{\rm{BS}}}} = {{\bf{Q}}_{\rm{r}}}{\bf{PQ}}_{\rm{t}}^{\rm{H}}{\bf{X}} + {\bf{N}}.
\end{equation}
We vectorize (\ref{echo2}) and formalize the associated definitions. Then, we obtain 
\begin{equation}\label{echo3}
\begin{array}{l}
	{\bf{\bar y}} 
	= {\mathop{\rm vec}\nolimits} \left( {{{\bf{Q}}_{\rm{r}}}{\bf{PQ}}_{\rm{t}}^{\rm{H}}{\bf{X}}} \right) + {\bf{\bar n}} = {\bf{\bar v}} + {\bf{\bar n}},
\end{array}
\end{equation}
where ${\bf{\bar y}} = {\mathop{\rm vec}\nolimits} \left( {\bf{Y}} \right) \in {\mathbb{C}^{T{N_{\rm{r}}} \times 1}}$ and ${\bf{\bar n}} = {\mathop{\rm vec}\nolimits} \left( N \right) \in {\mathbb{C}^{T{N_{\rm{r}}} \times 1}}$. ${\bf{\bar y}} \sim CN\left[ {\left( {{{\bf{X}}^{\rm{T}}} \otimes {{\bf{I}}_{{N_{\rm{r}}} \times {N_{\rm{r}}}}}} \right){\bf{\bar w}},{\sigma ^2}{{\bf{I}}_{{N_{\rm{r}}} \times {N_{\rm{r}}}}}} \right]$ represents a \emph{circularly symmetric complex Gaussian} (CSCG) random vector, where its statistical properties are fully characterized by a zero-mean condition and covariance matrix ${{\sigma ^2}{{\bf{I}}_{{N_{\rm{r}}} \times {N_{\rm{r}}}}}}$. Building upon the preceding framework, we postulate that the DFBS possesses prior information regarding the multiple targets, which is subsequently leveraged to estimate the unknown parameters ${p_k}$ and $\phi_k$. Given ${\bf{p}} = {\left[ {{p_1},{p_2}, \cdots ,{p_K}} \right]^{\rm{T}}}$, we hold ${\bf{p}} = {{\bf{p}}_{{\mathop{\rm Re}\nolimits} }} + {\rm{j}}{{\bf{p}}_{{\mathop{\rm Im}\nolimits} }}$, ${{\bf{p}}_{{\mathop{\rm Re}\nolimits} }} = {\left[ {{\mathop{\rm Re}\nolimits} \left( {{p_1}} \right),{\mathop{\rm Re}\nolimits} \left( {{p_2}} \right), \cdots ,{\mathop{\rm Re}\nolimits} \left( {{p_K}} \right)} \right]^{\rm{T}}}$, and ${{\bf{p}}_{{\mathop{\rm Im}\nolimits} }} = {\left[ {{\mathop{\rm Im}\nolimits} \left( {{p_1}} \right),{\mathop{\rm Im}\nolimits} \left( {{p_2}} \right), \cdots ,{\mathop{\rm Im}\nolimits} \left( {{p_K}} \right)} \right]^{\rm{T}}}$, respectively, implying the existence of $3K$ unknown parameters to be estimated, which can be compactly represented as an unknown parameter vector ${\bf{b}} = {\left[ {{{\bf{\phi }}^{\rm{T}}},{\bf{p}}_{{\mathop{\rm Re}\nolimits} }^{\rm{T}},{\bf{p}}_{{\mathop{\rm Im}\nolimits} }^{\rm{T}}} \right]^{\rm{T}}} \in {^{3K \times 1}}$.
Then, we get 
\begin{equation}\label{FIM}
{{\bf{F}}_{\bf{b}}}\left[ {i,j} \right] = \frac{2}{{{\sigma ^2}}}{\mathop{\rm Re}\nolimits} \left[ {\frac{{\partial {{{\bf{\bar v}}}^{\rm{H}}}}}{{\partial {\bf{b}}\left[ i \right]}} \cdot \frac{{\partial {\bf{\bar v}}}}{{\partial {\bf{b}}\left[ j \right]}}} \right].
\end{equation}
Following the approach that was described in \cite{MT_ISAC1}, we formalize the definitions 
\begin{equation}\label{deri_1}
{{\bf{\dot Q}}_{i}} = \left[ {\frac{{\partial {{\bf{q}}_{i}}\left( {{\phi _1}} \right)}}{{\partial {\phi _1}}},\frac{{\partial {{\bf{q}}_{i}}\left( {{\phi _2}} \right)}}{{\partial {\phi _2}}}, \cdots ,\frac{{\partial {{\bf{q}}_{i}}\left( {{\phi _K}} \right)}}{{\partial {\phi _K}}}} \right], i \in \left\{ {{\rm{t}},{\rm{r}}} \right\}.
\end{equation}
 The elements of the \emph{fisher information matrix} (FIM) can be expressed as
 \begin{equation}\label{F12}
 	\begin{array}{l}  \displaystyle
 		{{\bf{F}}_{12}} = \left( {T{\bf{\dot Q}}_{\rm{r}}^{\rm{T}}{{\bf{Q}}_{\rm{r}}}} \right) \odot \left( {{\bf{PQ}}_{\rm{t}}^{\rm{T}}{\bf{R}}_{\bf{X}}^{\rm{T}}{{\bf{Q}}_{\rm{t}}}} \right)\\  \displaystyle \qquad
 		+ \left( {T{\bf{Q}}_{\rm{r}}^{\rm{T}}{{\bf{Q}}_{\rm{r}}}} \right) \odot \left( {{\bf{P\dot Q}}_{\rm{t}}^{\rm{T}}{\bf{R}}_{\bf{X}}^{\rm{T}}{{{\bf{\dot Q}}}_{\rm{t}}}} \right),
 	\end{array}
 \end{equation}
 \begin{equation}\label{F22}
 	{{\bf{F}}_{22}} = \left( {T{\bf{Q}}_{\rm{r}}^{\rm{T}}{{\bf{Q}}_{\rm{r}}}} \right) \odot \left( {{\bf{Q}}_{\rm{t}}^{\rm{T}}{\bf{R}}_{\bf{X}}^{\rm{T}}{{\bf{Q}}_{\rm{t}}}} \right).
 \end{equation}
\begin{equation}\label{F11}
\begin{array}{l} \displaystyle
	{{\bf{F}}_{11}} = \left( {T{\bf{\dot Q}}_{\rm{r}}^{\rm{T}}{{{\bf{\dot Q}}}_{\rm{r}}}} \right) \odot \left( {{\bf{PQ}}_{\rm{t}}^{\rm{T}}{\bf{R}}_{\bf{X}}^{\rm{T}}{{\bf{Q}}_{\rm{t}}}{{\bf{P}}^{\rm{T}}}} \right)\\ \displaystyle \qquad
	+ \left( {T{\bf{\dot Q}}_{\rm{r}}^{\rm{T}}{{\bf{Q}}_{\rm{r}}}} \right) \odot \left( {{\bf{PQ}}_{\rm{t}}^{\rm{T}}{\bf{R}}_{\bf{X}}^{\rm{T}}{{{\bf{\dot Q}}}_{\rm{t}}}{{\bf{P}}^{\rm{T}}}} \right)\\ \displaystyle \qquad
	+ \left( {T{\bf{Q}}_{\rm{r}}^{\rm{T}}{{\bf{Q}}_{\rm{r}}}} \right) \odot \left( {{\bf{P\dot Q}}_{\rm{t}}^{\rm{T}}{\bf{R}}_{\bf{X}}^{\rm{T}}{{{\bf{\dot Q}}}_{\rm{t}}}{{\bf{P}}^{\rm{T}}}} \right)\\ \displaystyle \qquad
	+ \left( {T{\bf{Q}}_{\rm{r}}^{\rm{T}}{{{\bf{\dot Q}}}_{\rm{r}}}} \right) \odot \left( {{\bf{P\dot Q}}_{\rm{t}}^{\rm{T}}{\bf{R}}_{\bf{X}}^{\rm{T}}{{\bf{Q}}_{\rm{t}}}{{\bf{P}}^{\rm{T}}}} \right),
\end{array}
\end{equation}

Thus, the FIM can be written as
\begin{equation}\label{FIM2} \displaystyle
{{\bf{F}}_{\bf{b}}} = \frac{2}{{{\sigma ^2}}}\left[ {\begin{array}{*{20}{c}} \displaystyle
		{{\mathop{\rm Re}\nolimits} \left( {{{\bf{F}}_{11}}} \right)}&{{\mathop{\rm Re}\nolimits} \left( {{{\bf{F}}_{12}}} \right)}&{ - {\mathop{\rm Im}\nolimits} \left( {{{\bf{F}}_{12}}} \right)}\\ \displaystyle
		{{\mathop{\rm Re}\nolimits} \left( {{\bf{F}}_{12}^{\rm{T}}} \right)}&{{\mathop{\rm Re}\nolimits} \left( {{{\bf{F}}_{22}}} \right)}&{ - {\mathop{\rm Im}\nolimits} \left( {{{\bf{F}}_{22}}} \right)}\\ \displaystyle
		{ - {\mathop{\rm Im}\nolimits} \left( {{\bf{F}}_{12}^{\rm{T}}} \right)}&{ - {\mathop{\rm Im}\nolimits} \left( {{\bf{F}}_{22}^{\rm{T}}} \right)}&{{\mathop{\rm Re}\nolimits} \left( {{{\bf{F}}_{22}}} \right)}
\end{array}} \right].
\end{equation}
The CRB matrix is defined as
\begin{equation}
{{\bf{C}}_{{\rm{CRB}}}} = {\bf{F}}_{\bf{b}}^{ - 1}.
\end{equation}
We employ the trace of the CRB matrix $\textbf{C}_{\rm CRB}$ as a metric to quantify the estimation accuracy for the unknown parameter vector $\textbf{b}$. Specifically, $\textbf{C}_{\rm CRB}$ provides a lower bound on the covariance matrix of any unbiased estimator. The trace operation ${\mathop{\rm Tr}\nolimits} \left( {{{\bf{C}}_{{\rm{CRB}}}}} \right)$ thereby yields a scalar measure of the total estimation error variance across all parameters in $\textbf{b}$.


Finally, it is assumed that magnitudes of all channels considered follow Rician distribution. In other words, ${\bf{H}} = \sqrt \omega  {{\bf{H}}_{{\rm{LoS}}}} + \sqrt {1 - \omega } {{\bf{H}}_{{\rm{NLoS}}}}$, \cite{6_ISAC,MT_ISAC_3,rician}, where $\omega  \in \left[ {0,1} \right]$ is the weighted factor of the \emph{line-of-sight} (LOS) component.

\section{Optimization Formulation and Solution}
\subsection{Optimization Formulation}
The design of ISAC waveforms can be approached by investigating the trade-off between distinct communication and radar sensing metrics. We select the communication rate and the CRB as the respective performance measures for communication and sensing. To this end, we formulate an RSMA-assisted ISAC waveform optimization problem that minimizes the trace of the CRB matrix, while enhancing the minimum communication rate requirements. The optimization problem can be formally expressed as
\begin{subequations} \label{P1}
	\begin{align}
		&{\bf{P1}}:\mathop {\min }\limits_{{{\bf{U}}_{\rm{c}}},{{\bf{U}}_m},{{\bf{r}}_{\rm{c}}},{\bf{o}}} \left\{ {{{\bf{C}}_{{\rm{CRB}}}}, - \min \left( {{r_{{\rm{c}},m}} + {R_m}} \right)} \right\}\\
		&	{\rm{s}}{\rm{.t.}} \quad {R_{\rm{c}}} \ge \sum\nolimits_{m = 1}^M {{r_{{\rm{c}},m}}} ,\\ 
		&\quad\quad\; {R_{\rm{c}}} \ge {I_{\rm{c}}},\\ 
		&\quad\quad\;  {R_{m}} \ge {I_{\rm{p}}},\\
		&\quad\quad\; {\mathop{\rm Tr}\nolimits} \left( {{{\bf{R}}_{\bf{X}}}} \right) = \left\| {{{\bf{u}}_{\rm{c}}}} \right\|_2^2 + \sum\nolimits_{m = 1}^M {\left\| {{{\bf{u}}_m}} \right\|_2^2}  \le {P_{\max }},\\
		&  \quad\quad\; \left| {{\gamma _i} - {\gamma _j}} \right| \le \rho_0 ,i \ne j,i,j \in \left\{ {1,2, \cdots ,K} \right\},
	\end{align}	
\end{subequations}
where (\ref{P1}a) is the optimization objective; (\ref{P1}b) is the constraint on the common rate splitting; (\ref{P1}c) and (\ref{P1}d) are the constraints on the common and private rates; (\ref{P1}e) is the constraint on the transmit power of the DFBS; (\ref{P1}f) is the constraint on sensing power allocation, with $\rho_0$ denoting the tolerance threshold for the difference between two sensing SCNRs of echo signals from any two targets, respectively.

\subsection{Optimization Solution}
The joint optimization of sensing and communications in (\ref{P1}a) inherently introduces a performance tradeoff between these functionalities due to the finite transmit power at the DFBS, where enhancements in both sensing and communication critically depend on power allocation. To quantify the prioritization of sensing versus communications during optimization, we introduce two weighting parameters, $\lambda_1$ and $\lambda_2$, constrained by ${\lambda _1} + {\lambda _2} = 1$. A larger $\lambda_1$ (or $\lambda_2$) indicates that the DFBS allocates more power resources to sensing (or communications). Additionally, an auxiliary variable $a$ representing the upper bound of the joint sensing-communication performance, is introduced. Minimizing $a$ is equivalent to achieving the joint is equivalent to achieving the joint optimality of sensing and communications. Consequently, the original problem P1 can be reformulated as the following Pareto-optimal problem of sensing and communications:
\begin{subequations} \label{P1_supp}
	\begin{align}
		&{\bf{P2}}:\mathop {\min }\limits_{{{\bf{U}}_{\rm{c}}},{{\bf{U}}_m},{{\bf{r}}_{\rm{c}}},{\bf{o}}} a\\
		&	{\rm{s}}{\rm{.t.}} \quad {\lambda _1}{{\bf{C}}_{{\rm{CRB}}}} \le a,\\ 
		&   \quad \quad\;    - {\lambda _2}\min \left( {{r_{{\rm{c}},m}} + {R_m}} \right) \le a,\\
		& \quad \quad\; (\ref{P1} \rm b)- (\ref{P1}f),
	\end{align}	
\end{subequations}
where parameters $\lambda_1$ and $\lambda_2$ are weighted coefficients for CRB minimization and user rate maximization, respectively, ${\lambda _1} + {\lambda _2} = 1$. By adjusting $\lambda_1$ and $\lambda_2$, we obtain the complete set of Pareto-optimal ISAC transmit BF schemes that span the tradeoff between minimizing CRB and maximizing user rate.

\begin{theorem}\label{theorem_1}
	The optimization problem in (28) can be equivalently expressed as
	\begin{subequations} \label{P2}
		\begin{align}
			&{\bf{P3}}:\mathop {\min }\limits_{{{\bf{U}}_{\rm{c}}},{{\bf{U}}_m},{{\bf{r}}_{\rm{c}}},{\bf{o}},{\bf{c}},\wp } a \\
			&{\rm{s}}{\rm{.t.}} \quad {\lambda _1}\sum\nolimits_{i = 1}^{3K} {{v_i}}  \le a,\\
			&\quad\quad\; (\ref{P1_supp} \rm c)-(\ref{P1_supp} \rm d),  (\ref{P1_a3}),(\ref{P1_b5_1})-(\ref{P1_d5_1}), (\ref{gamma_4}) \\
			&\quad\quad\; {\mathop{\rm rank}\nolimits} \left( {{{\bf{U}}_{\rm{c}}}} \right) == {\mathop{\rm rank}\nolimits} \left( {{{\bf{U}}_m}} \right) = =1.
		\end{align}	
	\end{subequations}
\end{theorem}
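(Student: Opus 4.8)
The plan is to prove the equivalence constraint-by-constraint, showing that every transformation introduced in passing from \textbf{P2} to \textbf{P3} is lossless, so that the two problems share the same optimal value and their optimizers are in one-to-one correspondence. The only genuinely new ingredients are (i) a Schur-complement reformulation of the Cram\'er--Rao constraint, (ii) a semidefinite lifting of the beamformers together with the worst-case treatment of the imperfect CSI, and (iii) the epigraph handling of the $\min$ term. I would dispose of these in turn and then verify that the rank-one conditions restore exactness.

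First I would treat the sensing constraint. Since $\mathbf{C}_{\rm CRB}=\mathbf{F}_{\bf b}^{-1}$ and the scalar metric of interest is $\mathrm{Tr}(\mathbf{C}_{\rm CRB})$, constraint (27b) reads $\lambda_1\,\mathrm{Tr}(\mathbf{F}_{\bf b}^{-1})\le a$. For each $i\in\{1,\dots,3K\}$ I would introduce an auxiliary variable $v_i$ (collected in the vector $\mathbf c$) enforcing $[\mathbf{F}_{\bf b}^{-1}]_{ii}\le v_i$; by the Schur complement this diagonal bound is equivalent to the linear matrix inequality $\left[\begin{smallmatrix}\mathbf{F}_{\bf b} & \mathbf{e}_i\\ \mathbf{e}_i^{\rm T} & v_i\end{smallmatrix}\right]\succeq \mathbf{0}$, with $\mathbf{e}_i$ the $i$-th standard basis vector, and this is affine in $\mathbf{R}_{\bf X}$ and hence in $(\mathbf{U}_{\rm c},\mathbf{U}_m)$ through (14) and the Fisher-information expressions (21)--(24). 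Summing the diagonal bounds gives $\mathrm{Tr}(\mathbf{F}_{\bf b}^{-1})\le\sum_{i=1}^{3K}v_i$, so (27b) is implied by $\lambda_1\sum_i v_i\le a$ together with the $3K$ LMIs. Tightness at the optimum is the key point: for any feasible point, lowering every $v_i$ toward $[\mathbf{F}_{\bf b}^{-1}]_{ii}$ preserves LMI feasibility and can only relax the bound on $a$, so an optimal solution of \textbf{P3} meets all diagonal bounds with equality, yielding the reverse inclusion.

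Next I would handle the communication side. Writing the SINRs (3)--(4) through the lifted matrices $\mathbf{U}_{\rm c}=\mathbf{u}_{\rm c}\mathbf{u}_{\rm c}^{\rm H}$ and $\mathbf{U}_m=\mathbf{u}_m\mathbf{u}_m^{\rm H}$ turns every signal and interference term into a trace of the form $\mathrm{Tr}(\mathbf{H}_m\mathbf{U}_{\rm c})$ or $\mathrm{Tr}(\mathbf{H}_m\mathbf{U}_j)$. The imperfect CSI is absorbed by replacing each such trace with its worst-case value from (10)--(13): numerators are lower-bounded using $\mathbf{H}_{\rm es}-\mathrm{e}_{\rm h,max}\mathbf{I}$ and interference terms upper-bounded using $\mathbf{H}_{\rm es}+\mathrm{e}_{\rm h,max}\mathbf{I}$, producing the robust rate constraints cited in \textbf{P3}. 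The $\min$ in (27c) is removed by the epigraph identity $-\lambda_2\min_m(r_{{\rm c},m}+R_m)\le a \iff -\lambda_2(r_{{\rm c},m}+R_m)\le a$ for every $m$. The slack variables $\mathbf c$ and $\wp$ enter here to place the $\log_2(1+\gamma)$ terms in epigraph form (to be convexified later by SCA), and since each such move is an exact epigraph reformulation it is equivalence-preserving.

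The main obstacle is the lifting step: replacing $\mathbf{u}\mathbf{u}^{\rm H}$ by a free positive-semidefinite $\mathbf{U}$ is in general a relaxation, so the two problems coincide only once the rank-one conditions in the last line of \textbf{P3} are reimposed. I would argue that any feasible $(\mathbf{u}_{\rm c},\mathbf{u}_m)$ of \textbf{P2} maps to a rank-one feasible point of \textbf{P3} of equal objective via $\mathbf{U}_{\rm c}=\mathbf{u}_{\rm c}\mathbf{u}_{\rm c}^{\rm H}$ and $\mathbf{U}_m=\mathbf{u}_m\mathbf{u}_m^{\rm H}$, and conversely that any feasible point of \textbf{P3} with $\mathbf{U}_{\rm c}$ and $\mathbf{U}_m$ of rank one factors back as $\mathbf{u}_{\rm c}\mathbf{u}_{\rm c}^{\rm H}$ and $\mathbf{u}_m\mathbf{u}_m^{\rm H}$, whereupon every transformed constraint collapses to its original counterpart on the rank-one manifold. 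Combined with the fact that the auxiliary variables $v_i,\mathbf c,\wp$ are uniquely pinned at optimality by the tightness arguments above, this gives a bijection on optimizers and identical optimal values, establishing the claimed equivalence; the subsequent convexification of the rank and logarithmic terms by SDR, SCA, and penalty functions is deferred and is not required for the equivalence itself.
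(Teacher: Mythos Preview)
Your treatment of the Cram\'er--Rao constraint via the Schur complement matches the paper, but you have misread what \textbf{P3} actually contains, and this undermines the core of your argument. The constraints referenced in (29c) as (\ref{P1_b5_1})--(\ref{P1_d5_1}) are \emph{already} the first-order Taylor linearizations of the rate constraints, evaluated at a previous iterate $c_{i,m,0}$; they are not exact epigraph reformulations with SCA ``deferred.'' The paper's derivation proceeds by (i) recasting the rate inequalities (\ref{P1}b)--(\ref{P1}d) as trace inequalities, (ii) introducing the specific auxiliary variables $\mathbf{c}=\{c_{1,m},\dots,c_{6,m}\}$ to split each into a bilinear form $\mathrm{Tr}(\cdot)\ge c_{2i-1,m}c_{2i,m}$, (iii) applying phase rotation, and then (iv) replacing the square-root of the bilinear term by its first-order Taylor expansion around the current iterate. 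Step (iv) is an approximation, not an equivalence, so your insistence that every move is ``lossless'' and that tightness at optimality yields a bijection of optimizers cannot go through as stated: \textbf{P3} is a local surrogate for \textbf{P2}, and the paper's use of the word ``equivalently'' is heuristic rather than exact.

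Two further gaps: first, you have swapped the roles of $\mathbf{c}$ and $\wp$---in the paper $\wp=\{v_1,\dots,v_{3K}\}$ collects the CRB slack variables while $\mathbf{c}$ collects the six rate-constraint auxiliaries, not the reverse. Second, you do not address the transformation of the fairness constraint (\ref{P1}f) into the box constraint (\ref{gamma_4}) on the sensing-power weights $o_k$, which the paper obtains by bounding every $\gamma_i$ between $\gamma_{\min}$ and $\gamma_{\min}+\rho_0$ and then using the factorization $\gamma_k=o_k\gamma_{o,k}$ from (\ref{gamma_echo}); this step is part of \textbf{P3} and must appear in any proof of the theorem.
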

\begin{proof}
The proof of Theorem \ref{theorem_1} is given in Appendix A. 
\end{proof}

Evidently, with the exception of constraint (\ref{P2}d), all of the remaining objective and constraints have been transformed into convex and tractable forms. Given the discontinuous nature of (\ref{P2}d), where penalty function methods demonstrate superior efficacy compared to the SDR, we further integrate (\ref{P2}d) and rewrite (\ref{P2}b) as
\begin{equation}\label{CRB_penalty}
\begin{array}{l}  \displaystyle
	{\lambda _1}\sum\nolimits_{i = 1}^{3K} {{v_i}}  \le a - {\Xi _{\rm{c}}}\left[ {{\rm{Tr}}\left( {{{\bf{U}}_{\rm{c}}}} \right) - {{\left( {{{\bf{u}}_{{\rm{c}},i}}{\bf{u}}_{{\rm{c}},i}^{\rm{H}}} \right)}^{\rm{T}}}{{\bf{U}}_{\rm{c}}}} \right]\\ \displaystyle
	- \sum\nolimits_{m = 1}^M {{\Xi _m}\left[ {{\rm{Tr}}\left( {{{\bf{U}}_m}} \right) - {{\left( {{{\bf{u}}_{m,i}}{\bf{u}}_{m,i}^{\rm{H}}} \right)}^{\rm{T}}}{{\bf{U}}_m}} \right]} 
\end{array},
\end{equation}
where ${\Xi _{\rm{c}}}$ and ${\Xi _m}$ are penalty factors of the penalty function in (\ref{P3_pro}a), $\textbf{u}_{{\rm c},i}$ and $\textbf{u}_{m,i}$ are the $i$-th iteration solutions of $\textbf{u}_{{\rm c}}$ and $\textbf{u}_{m}$. 
Then, we reformulate P3 as the following convex optimization problem:
\begin{subequations} \label{P3_pro}
	\begin{align}
		&{\bf{P4}}:
			\mathop {\min }\limits_{{{\bf{U}}_{\rm{c}}},{{\bf{U}}_m},{{\bf{r}}_{\rm{c}}},{\bf{o}},{\bf{c}},\wp }  a \\
		&{\rm{s}}{\rm{.t.}} \quad\; (\ref{P2} \rm c), \; (\ref{CRB_penalty}).
	\end{align}	
\end{subequations}
Hence, the convex targeted optimization P4 can be solved effectively through the successive iteration algorithm as shown in Algorithm 1 and related convex optimization tools. 

\subsection{Computation Complexity Analysis}
The computational complexity mainly results from the number of optimization variables and the number and the size of \emph{linear matrix inequality} (LMI) constraints in P4. For the targeted optimization P4, there exist $\left( {2N_{\rm{t}}^2 + N_{\rm{r}}^2} \right)$ original variables, $\left( {6M + 3K + 1} \right)$ slack variables, 1 $N_{\rm r}$-size LMI constraints, 2 $N_{\rm t}$-size LMI constraints, and $\left( {6M + 2K + 1} \right)$ 1-size LMI constraints. Therefore, the computational complexity of Algorithm 1 can be obtained as
\begin{equation}
\mathcal{O}\left\{T_{\rm iter} {\sqrt {{A_1}} {A_2}\left[ \begin{array}{l}
		2N_{\rm{t}}^2\left( {{N_{\rm{t}}} + {A_2}} \right) + N_{\rm{r}}^2\left( {{N_{\rm{r}}} + {A_2}} \right)\\ \displaystyle
		+ \left( {6M + 2K + 1} \right)\left( {1 + {A_2}} \right)
		 + A_2^2
	\end{array} \right]} \right\},
\end{equation}
where ${A_1} = 2{N_{\rm{t}}} + {N_{\rm{r}}} + 6M + 2K + 1$, ${A_2} = 2N_{\rm{t}}^2 + N_{\rm{r}}^2 + 6M + 3K + 1$ and $T_{\rm iter}$ represents the numbers of iterations for Algorithm \ref{alg:reference label}.

\begin{algorithm}[t]
	\caption{Fair-sensing-enabled RSMA-based ISAC iterative optimization algorithm.}
	\label{alg:reference label}
	
	{\algorithmicrequire} Channel vector ${\bf{h}}_m$, maximum transmit power $P_{\rm max}$.
	
	{\algorithmicensure} Pareto optimal common rate splitting vector ${{\bf{r}}_{\rm{c}}}$, BF vectors ${{\bf{u}}_{\rm{c}}}$ and ${{\bf{u}}_m}$, the sensing power allocation vector $\bf{o}$.
	
	\begin{algorithmic}[1] 
		\State Set ${\bf{u}}_{\rm{c},0}$ and ${\bf{u}}_{m,0}$ with the zero-forcing precoding scheme;
		\State Set positive thresholds $I_{\rm c}$, $I_{\rm p}$, $\rho_0$, $\rho_1$, $\rho_2$, and $\varepsilon=0$;
		\State Set penalty factors ${\Xi _{\rm{c}}}$ and ${\Xi _m}$;
		\State Initialize ${{\bf{R}}_{{\rm{c,}}\left( 0 \right)}}$, ${{\bf{O}}_{\left( 0 \right)}}$, ${{\bf{U}}_{{\rm{c,}}\left( 0 \right)}}$, and ${{\bf{U}}_{{{m,}}\left( 0 \right)}}$ to satisfy the constraints given in (\ref{P3_pro}b);
		\While {$\left| {{\mathop{\rm Tr}\nolimits} \left( {{{\bf{R}}_{{\rm{c,}}\left( \varepsilon  \right)}}} \right) - {\mathop{\rm Tr}\nolimits} \left( {{{\bf{R}}_{{\rm{c,}}\left( {\varepsilon  + 1} \right)}}} \right)} \right| \ge {\rho _1}$ or $\left| {{\mathop{\rm Tr}\nolimits} \left( {{{\bf{O}}_{\left( \varepsilon  \right)}}} \right) - {\mathop{\rm Tr}\nolimits} \left( {{{\bf{O}}_{\left( {\varepsilon  + 1} \right)}}} \right)} \right| \ge {\rho _1}$ or $\left| {{\mathop{\rm Tr}\nolimits} \left( {{{\bf{U}}_{{\rm{c}},\left( \varepsilon  \right)}}} \right) - {\mathop{\rm Tr}\nolimits} \left( {{{\bf{U}}_{{\rm{c}},\left( {\varepsilon  + 1} \right)}}} \right)} \right| \ge {\rho _1}$ or $\left| {{\mathop{\rm Tr}\nolimits} \left( {{{\bf{U}}_{m,\left( \varepsilon  \right)}}} \right) - {\mathop{\rm Tr}\nolimits} \left( {{{\bf{U}}_{m,\left( {\varepsilon  + 1} \right)}}} \right)} \right| \ge {\rho _1}$}
		\State Set $s=0$ and initialize auxiliary vector ${{\bf{c}}_{\left( s \right)}}$;
		\Repeat 
		\State $s=s+1$;
		\State Solve the targeted optimization in (\ref{P3_pro});
	   \State  Update related variables based on (\ref{P1_b5_1}) - (\ref{P1_d5_1});
		\State  Obtain $\gamma_i$, $i \in \left\{ {1,2, \cdots ,K} \right\}$;
		\Until {$\left| {{\gamma _i} - {\gamma _j}} \right| \le {\rho _0},i \ne j,i,j \in \left\{ {1,2, \cdots ,K} \right\}$}
		\State Calculate solutions ${{\bf{R}}_{{\rm{c,}}}}$, ${\bf{O}}$, ${{{\bf{U}}_{{\rm{c}}}}}$, and ${{{\bf{U}}_{{m}}}}$;
		\State Set ${{\bf{R}}_{{\rm{c,}}\left( {\varepsilon  + 1} \right)}} = {{\bf{R}}_{\rm{c}}}$ and ${{\bf{O}}_{\left( {\varepsilon  + 1} \right)}} = {\bf{O}}$;
		\State Set ${{\bf{U}}_{{\rm{c}},\left( {\varepsilon  + 1} \right)}} = {{\bf{U}}_{\rm{c}}}$ and ${{\bf{U}}_{m,\left( {\varepsilon  + 1} \right)}} = {{\bf{U}}_m}$;
		\If{${{\bf{R}}_{{\rm{c,}}\left( {\varepsilon  + 1} \right)}} = {{\bf{R}}_{{\rm{c,}}\left( \varepsilon  \right)}}$, ${{\bf{O}}_{\left( {\varepsilon  + 1} \right)}} = {{\bf{O}}_{\left( \varepsilon  \right)}}$, ${{\bf{U}}_{{\rm{c}},\left( \varepsilon  \right)}}$, and ${{\bf{U}}_{m,\left( {\varepsilon  + 1} \right)}} = {{\bf{U}}_{m,\left( \varepsilon  \right)}}$}
		\State Set ${\Xi _{\rm{c}}} = \left( {1 + {\rho _2}} \right){\Xi _{\rm{c}}}$ and ${\Xi _m}= \left( {1 + {\rho _2}} \right){\Xi _m}$;
		\EndIf
		\EndWhile
		\State Employ the singular value decomposition to solutions and then obtain ${{\bf{r}}_{{\rm{c}}}}$, $\bf{o}$, ${{\bf{u}}_{{\rm{c}}}}$, and ${{\bf{u}}_{{m}}}$;
	\end{algorithmic}
\end{algorithm} 
\renewcommand{\algorithmicrequire}{ \textbf{Input:}}     
\renewcommand{\algorithmicensure}{ \textbf{Output:}}    

\section{Numerical Results and Discussions}

\begin{figure}[htbp]
	\centering
	\includegraphics[width=8.75cm,height=7.25cm]{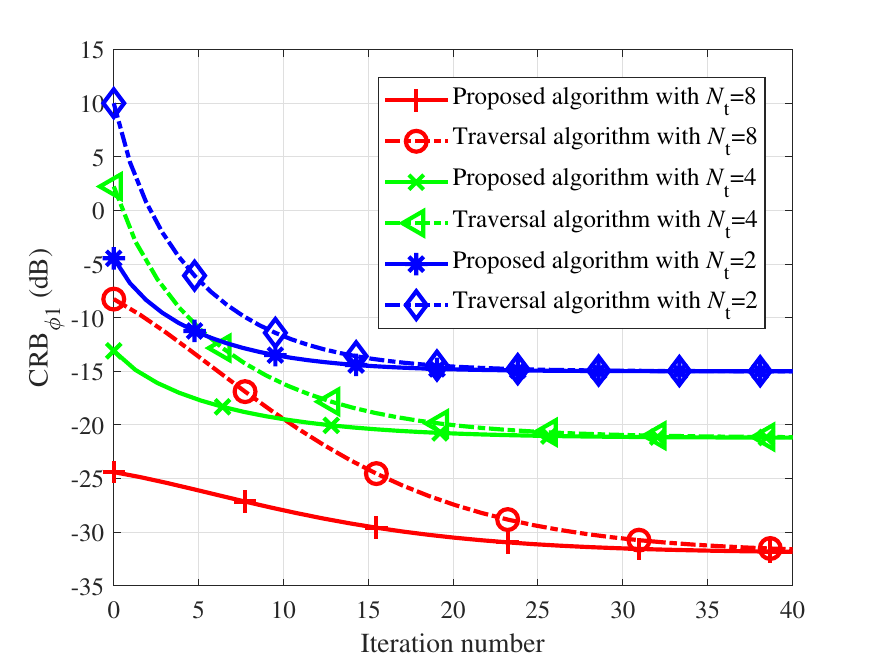}
	\caption{Convergence performance under different algorithms.}\label{figure2} 
\end{figure}

\begin{figure}[htbp]
	\centering
	\includegraphics[width=8.75cm,height=7.25cm]{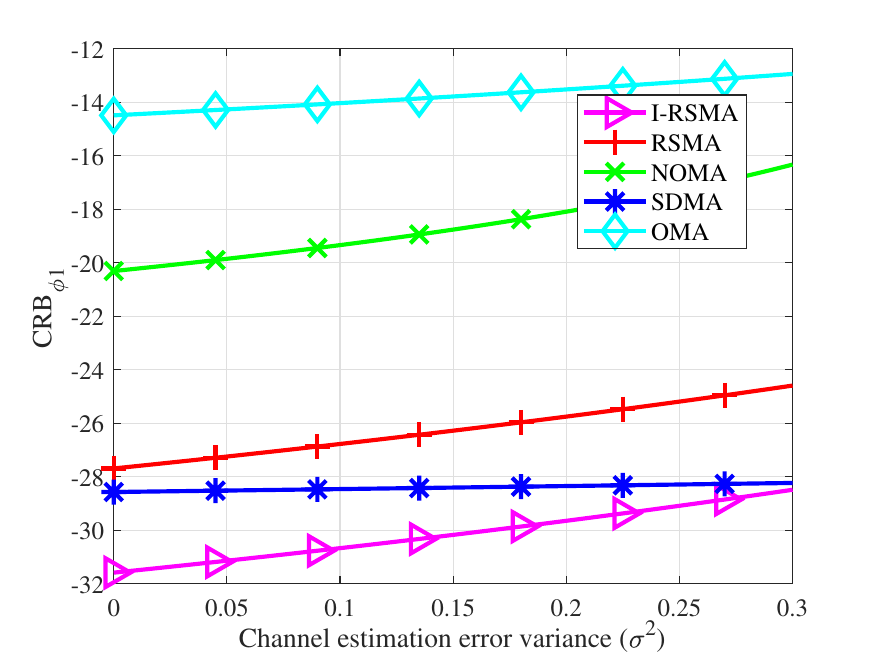}
	\caption{CRB versus CSI uncertainty for different MA schemes.}\label{figure3} 
\end{figure}

\begin{figure*}[htbp]
	\centering
	\subfigure[$\rm CRB_{\phi 1}$ versus $P_{\rm T}$ for different sensing and MA schemes.]{
		\label{figure4a}
		\includegraphics[width=8.75cm,height=7.25cm]{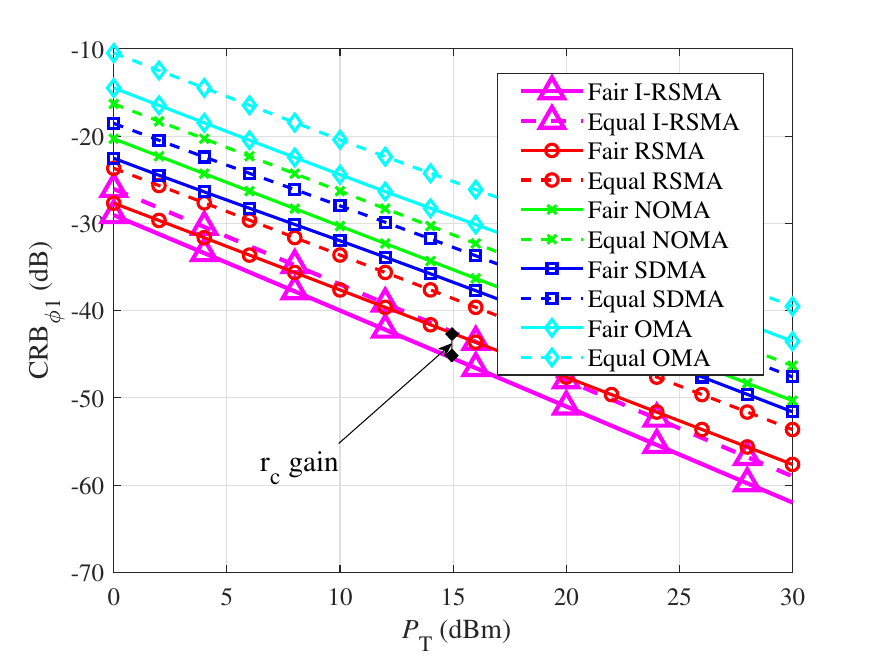}}
	\hspace{0.5mm}
	\subfigure[$\rm CRB_{\phi 2}$ versus $P_{\rm T}$ for different sensing and MA schemes.]{
		\label{figure4b} 
		\includegraphics[width=8.75cm,height=7.25cm]{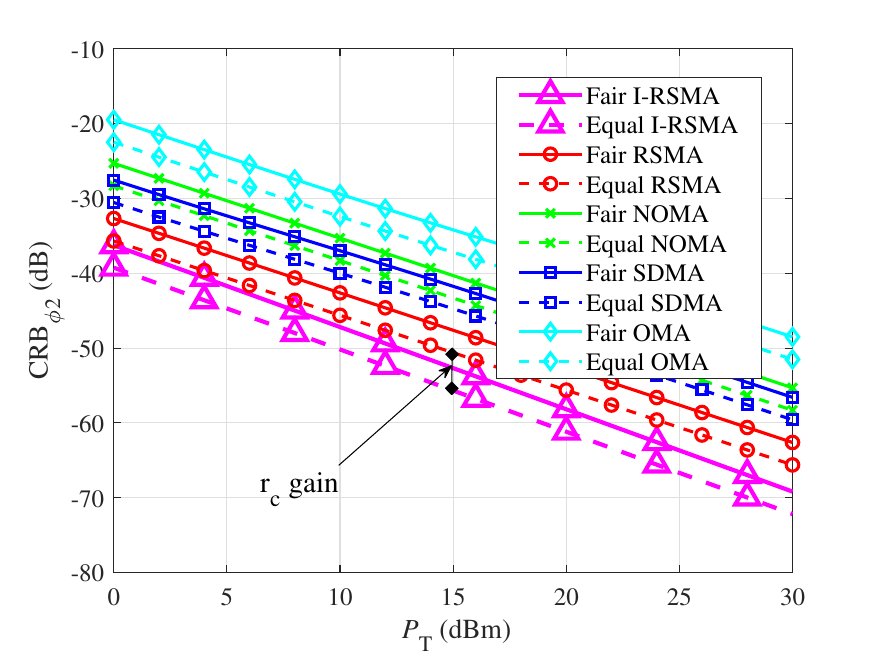}}
	\hspace{0.5mm}
	\subfigure[$\rm CRB_{\rm p 1}$ versus $P_{\rm T}$ for different sensing and MA schemes.]{
		\label{figure4c} 
		\includegraphics[width=8.75cm,height=7.25cm]{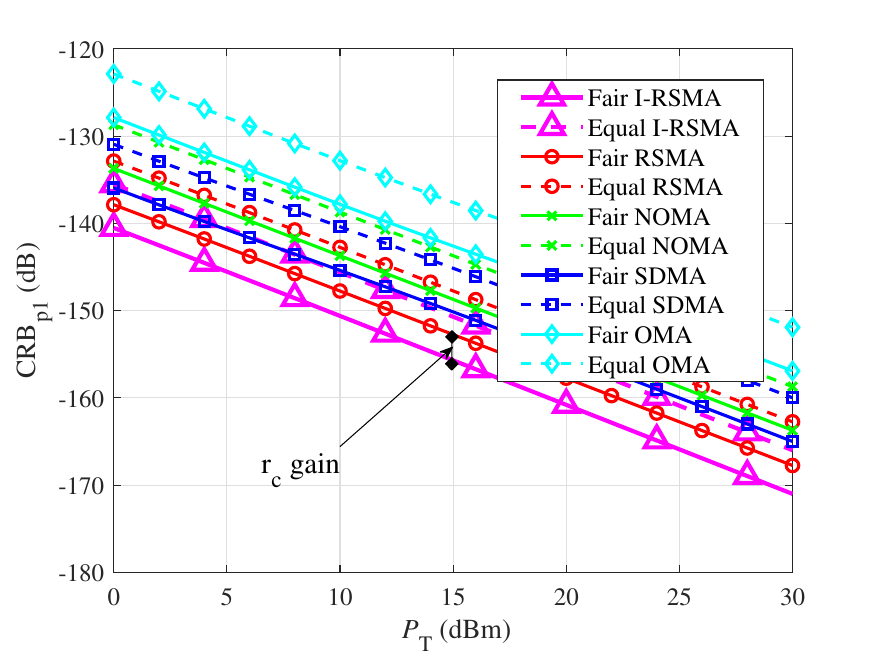}}
	\hspace{0.5mm}
	\subfigure[$\rm CRB_{\rm p 2}$ versus $P_{\rm T}$ for different sensing and MA schemes.]{
		\label{figure4d} 
		\includegraphics[width=8.75cm,height=7.25cm]{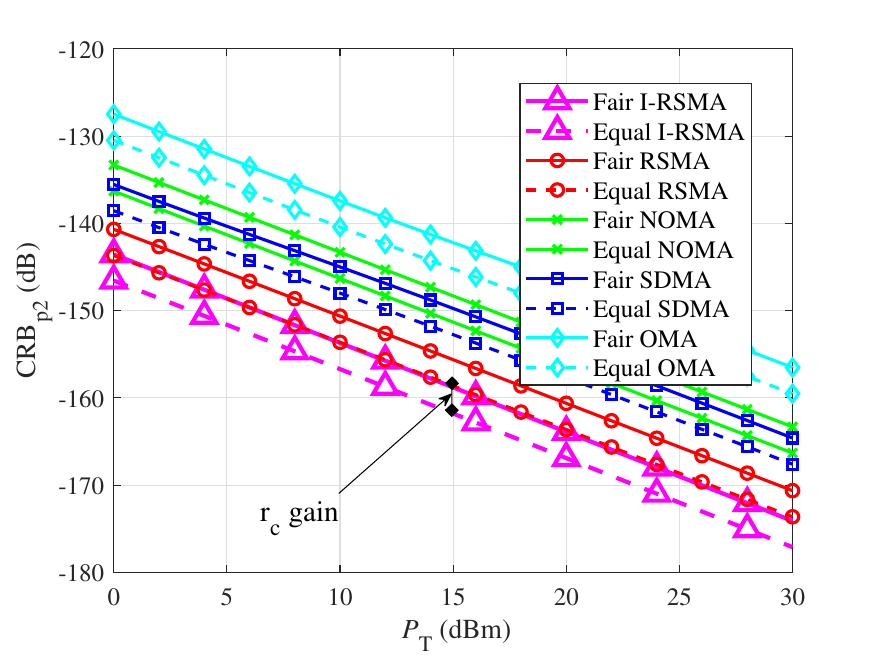}}
	\hspace{0.5mm}
	\caption{CRB versus transmit power for different sensing and MA schemes.}
	\label{figure4_dynamic} 
\end{figure*}

\begin{figure*}[htbp]
	\centering
	\subfigure[Tradeoff between $\rm CRB_{\phi 1}$ and $R_0$ for different sensing and MA schemes.]{
		\label{figure5a}
		\includegraphics[width=8.75cm,height=7.25cm]{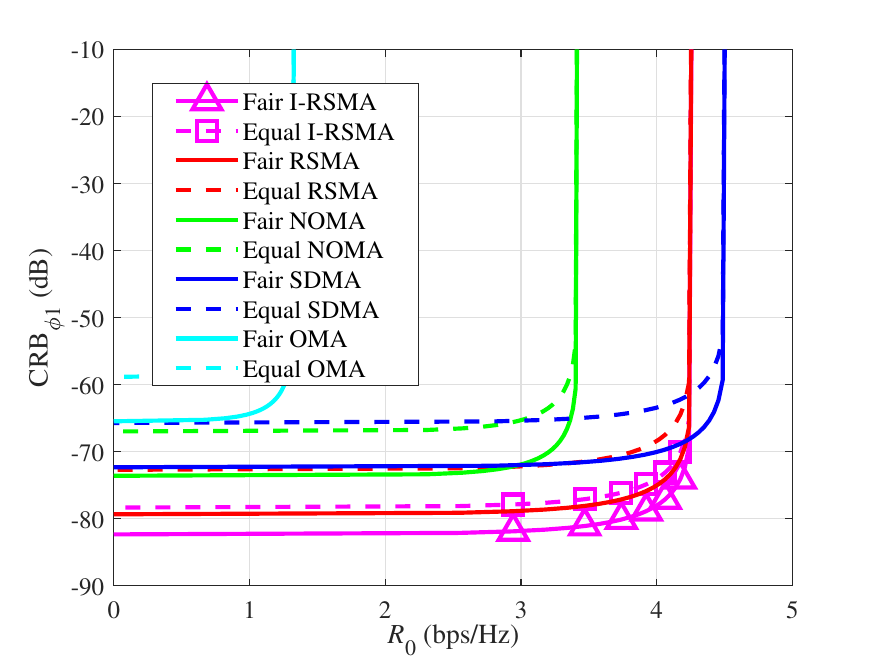}}
	\hspace{0.5mm}
	\subfigure[Tradeoff between $\rm CRB_{\phi 2}$ and $R_0$ for different sensing and MA schemes.]{
		\label{figure5b} 
		\includegraphics[width=8.75cm,height=7.25cm]{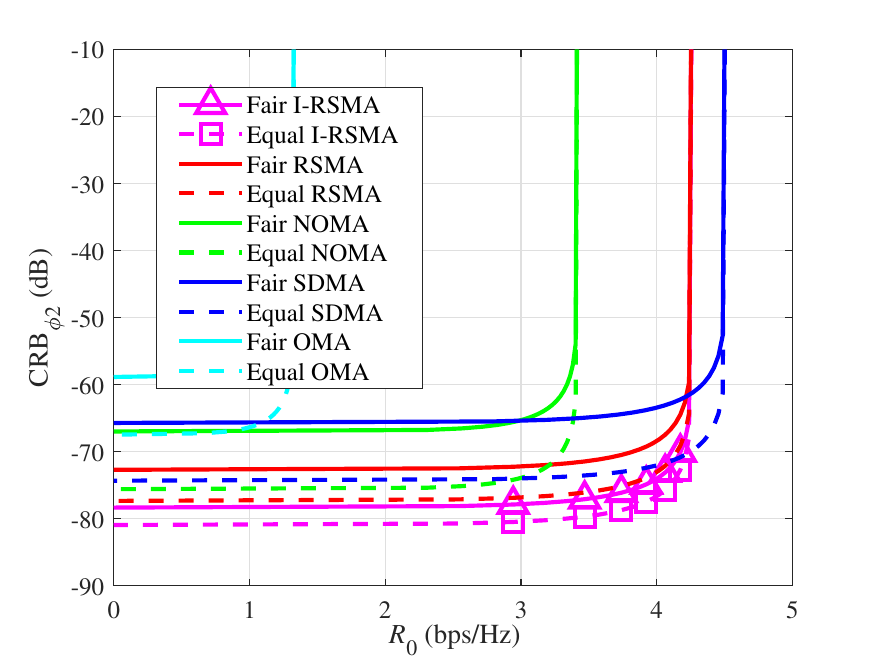}}
	\hspace{0.5mm}
	\subfigure[Tradeoff between $\rm CRB_{\rm p 1}$ and $R_0$ for different sensing and MA schemes.]{
		\label{figure5c}
		\includegraphics[width=8.75cm,height=7.25cm]{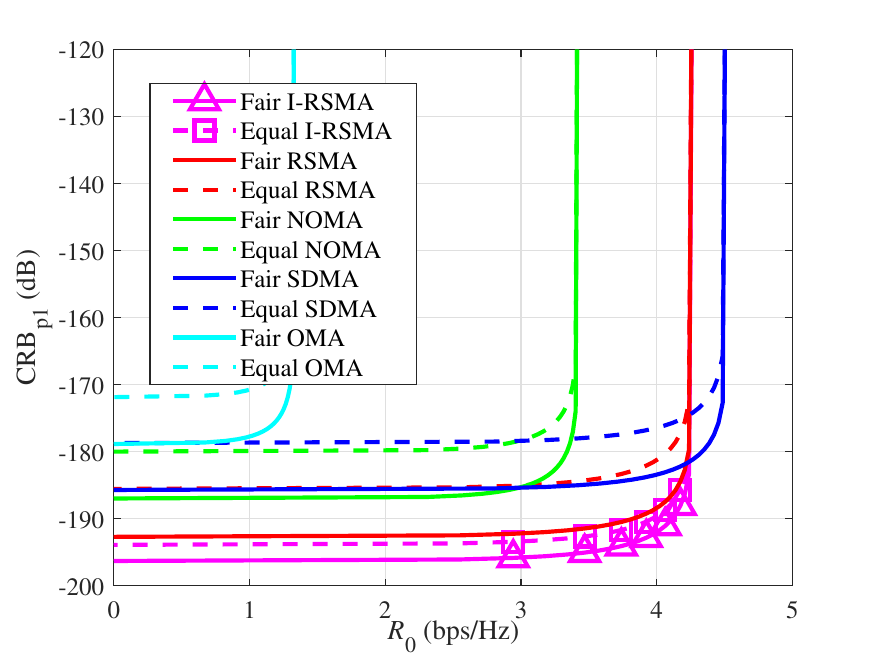}}
	\hspace{0.5mm}
	\subfigure[Tradeoff between $\rm CRB_{\rm p 2}$ and $R_0$ for different sensing and MA schemes.]{
		\label{figure5d}
		\includegraphics[width=8.75cm,height=7.25cm]{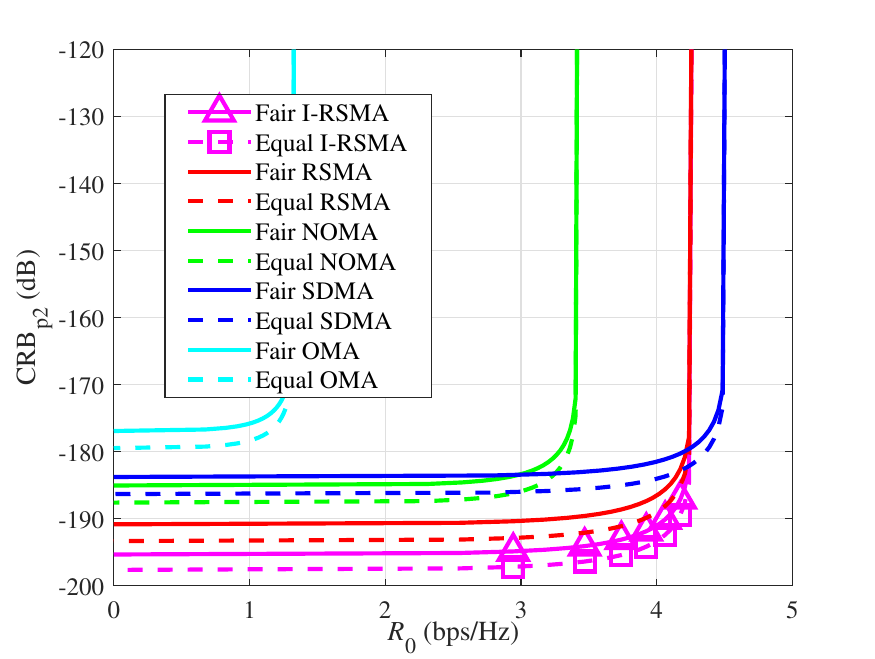}}
	\hspace{0.5mm}
	\caption{Pareto boundary on sensing and communications for different sensing and MA schemes.}
	\label{figure5_trade} 
\end{figure*}

\begin{figure*}[htbp]
	\centering
	\subfigure[3D fairness-aware RSMA-based beampattern.]{
		\label{figure6a}
		\includegraphics[width=8.75cm,height=7.25cm]{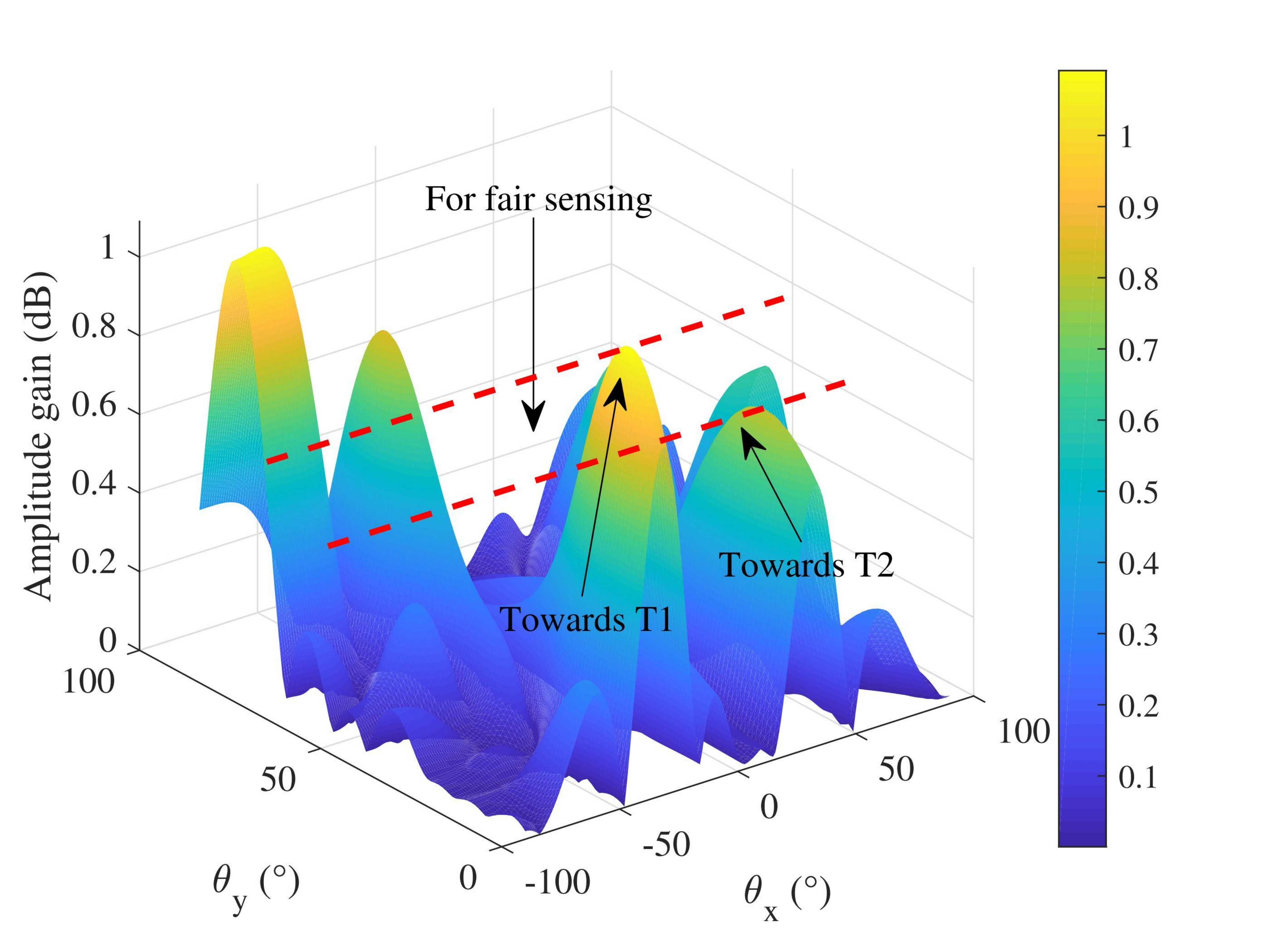}}
	\hspace{0.5mm}
	\subfigure[2D fairness-aware RSMA-based beampattern.]{
		\label{figure6b}
		\includegraphics[width=8.75cm,height=7.25cm]{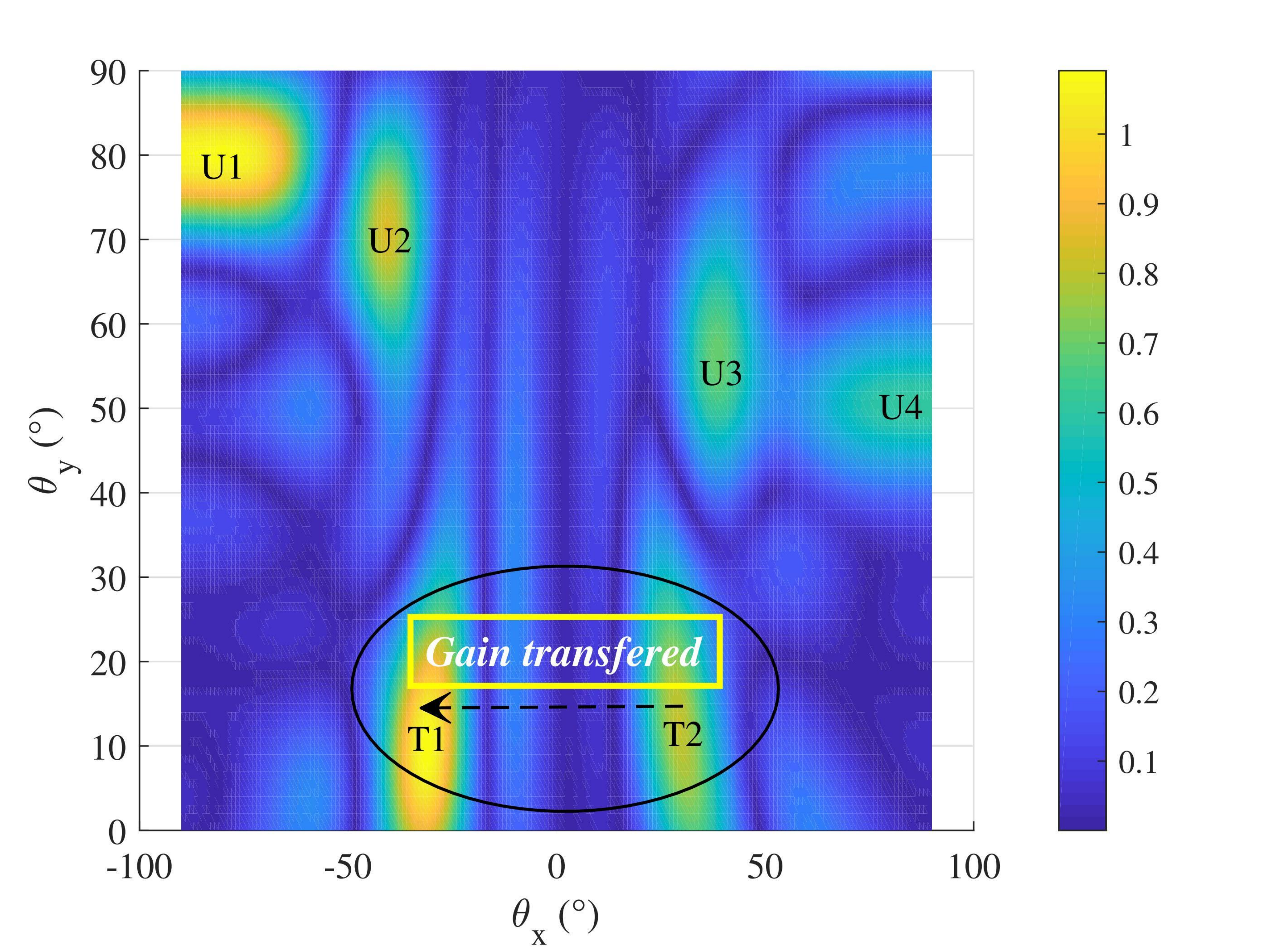}}
	\hspace{0.5mm}
	\subfigure[3D equality-aware RSMA-based beampattern.]{
		\label{figure6c} 
		\includegraphics[width=8.75cm,height=7.25cm]{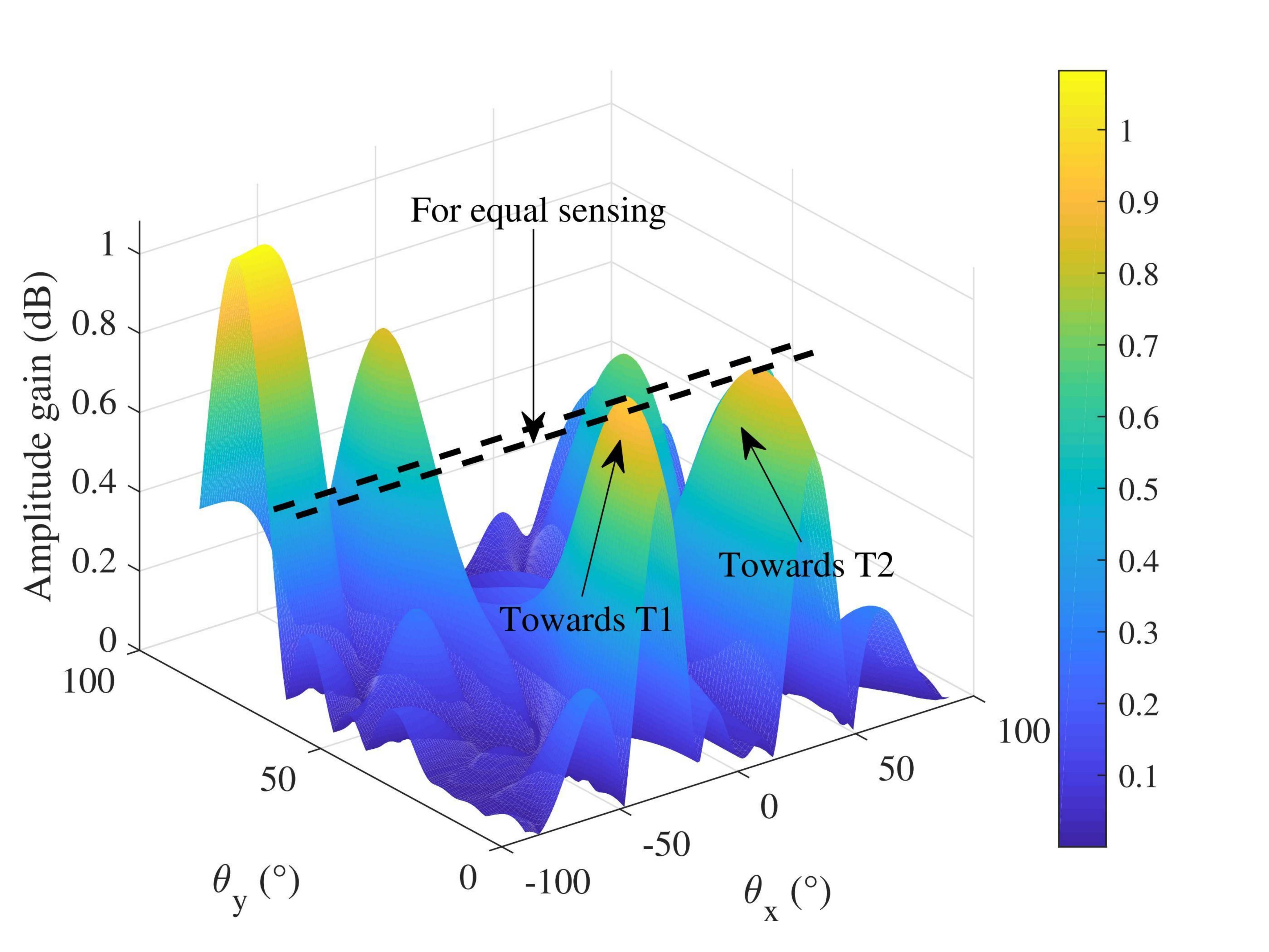}}
	\hspace{0.5mm}
		\subfigure[2D equality-aware RSMA-based beampattern.]{
		\label{figure6d} 
		\includegraphics[width=8.75cm,height=7.25cm]{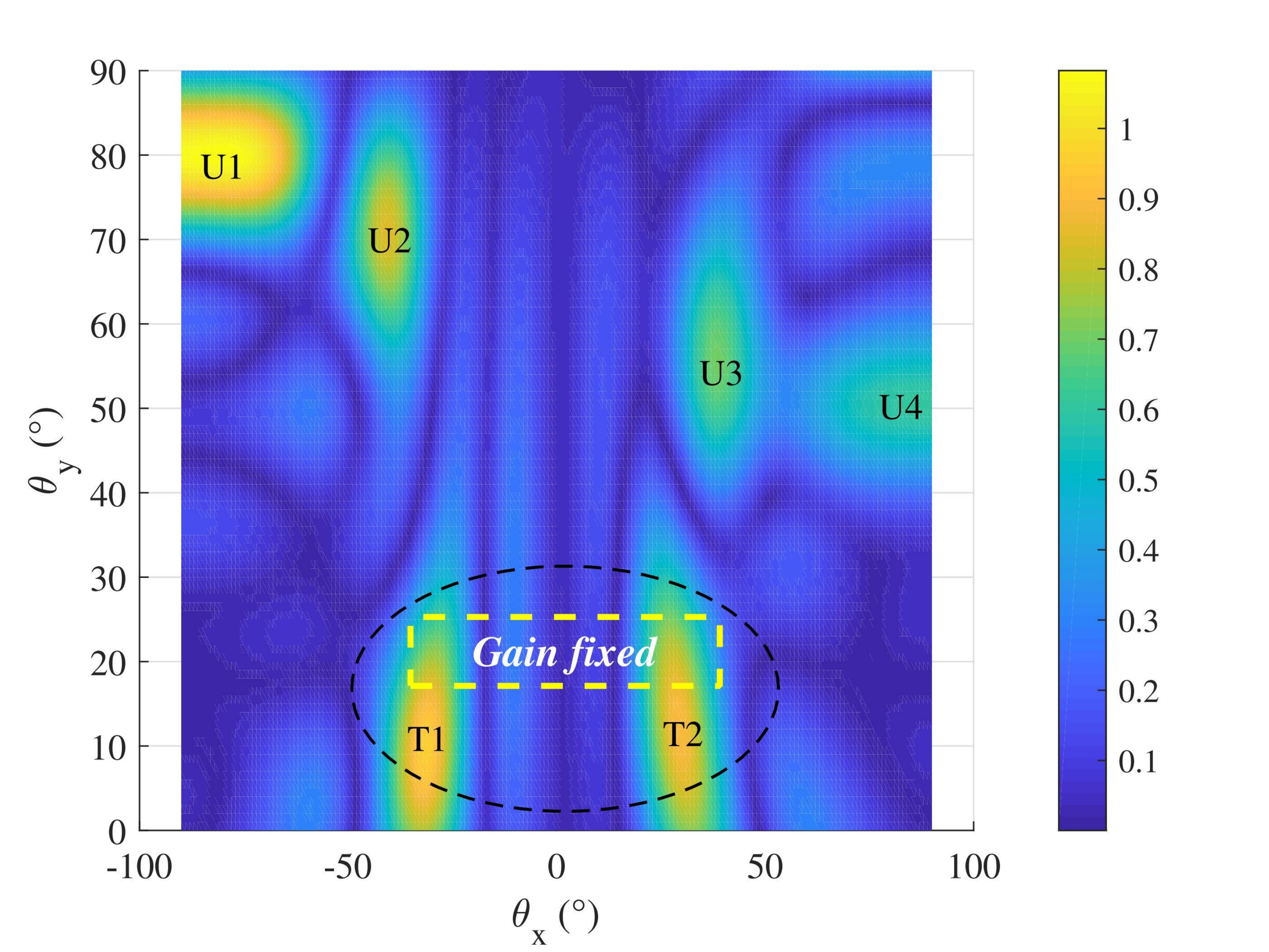}}
	\hspace{0.5mm}
	\caption{Comparison on fairness-aware and equality-aware beampatterns based on RSMA.}
	\label{figure6_trade} 
\end{figure*}

\begin{figure*}[htbp]
	\centering
	\subfigure[$\rm CRB_{\phi 1}$ and $\rm CRB_{\phi 2}$ by fairness-aware and equality-aware BF schemes.]{
		\label{figure7a}
		\includegraphics[width=8.75cm,height=5cm]{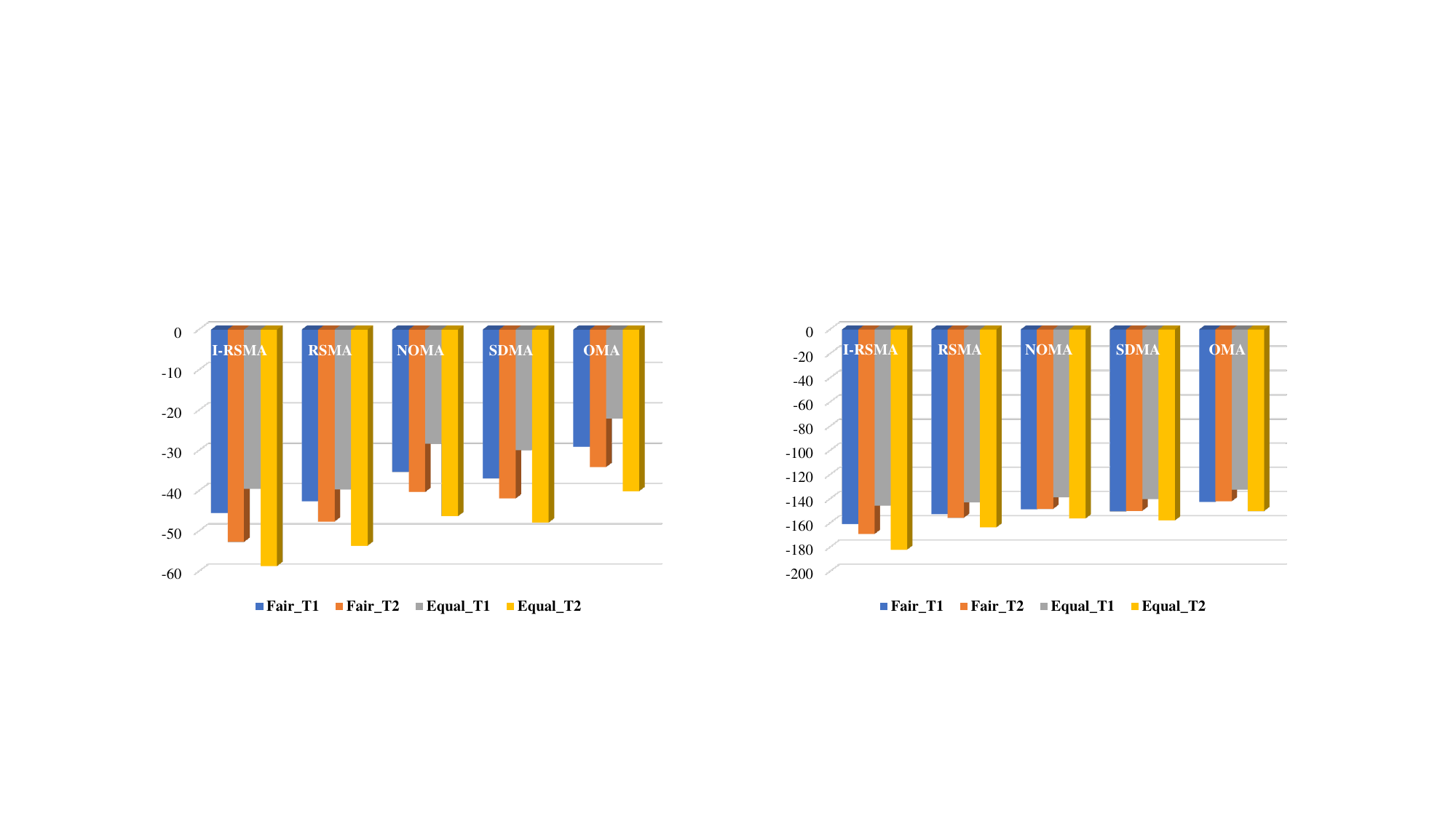}}
	\hspace{0.5mm}
	\subfigure[$\rm CRB_{\rm p 1}$ and $\rm CRB_{\rm p 2}$ by fairness-aware and equality-aware BF schemes.]{
		\label{figure7b}
		\includegraphics[width=8.75cm,height=5.1345cm]{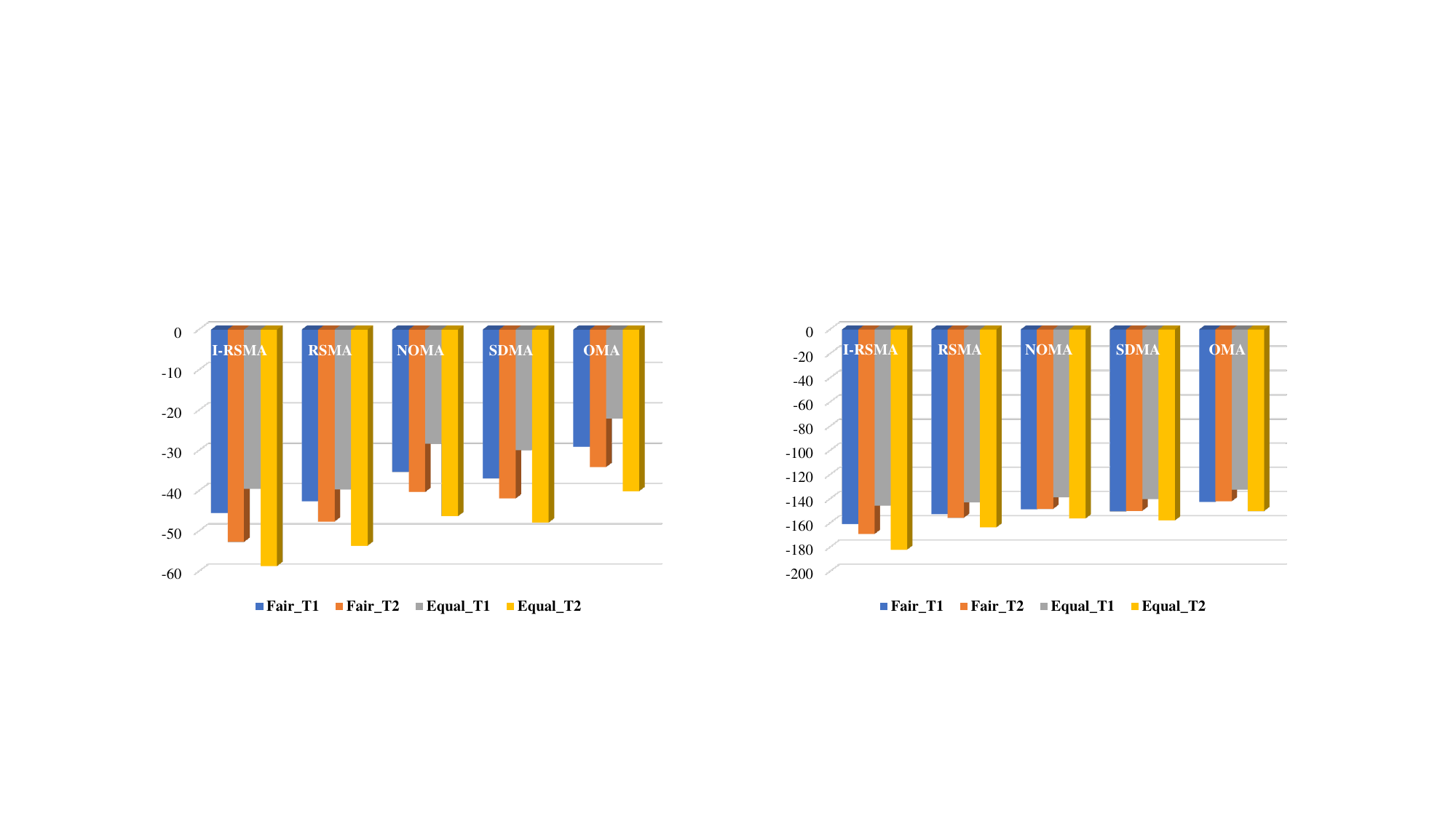}}
	\hspace{0.5mm}
	\caption{Comparison on CRBs achieved by fairness-aware and equality-aware BF schemes.}
	\label{figure7} 
\end{figure*}

In this section, Monte Carlo simulations are conducted to validate the effectiveness of the proposed methodology. The main simulation parameters are set as:  The bandwidth $f_{\rm m}$ = 80 MHz, the carrier frequency $f_{\rm c}$= 28 GHz, numbers of transmit and receive antennas $N_{\rm t}$=8, $N_{\rm r}$=8, number of targets and $K$=2, $M$=4, thresholds of common and private stream rates $I_{\rm c}$=1 bit/s/Hz, $I_{\rm p}$=1 bit/s/Hz, noise power at DFBS and users $\sigma^2$=0 dBm, maximum transmit power $P_{\rm max}$=40 dBm, iterative error tolerance $\rho_1$= $10^{-4}$, update step size of penalty factors $\rho_2$=0.3, weighted factor of the LoS component $\omega$= 0.5, and number of transmission and radar pulse blocks $T$=180, respectively \cite{RSMA_ISAC_Sensing_N,MT_ISAC_3}.
First, we demonstrate the convergence behavior of the Algorithm \ref{alg:reference label}. Next, we analyze the impact of varying CSI uncertainty parameters on sensing performance. Additionally, we illustrate the performance improvements in both sensing and communication functionalities achieved by the proposed ISAC framework. Finally, building upon Algorithm \ref{alg:reference label}, we present an RSMA-based sensing beampattern to confirm the effectiveness of the designed scheme.

In the subsequent analysis, we adopt a dual-target ISAC scenario as a representative case study for the multi-target sensing in ISAC. The proposed framework and analytical results remain applicable to systems with an arbitrary number of sensing targets. For simplicity, let $\rm CRB_{{\phi }\it i}$ and $\rm CRB_{{p }\it i}$ denote the angle and complex-parameter CRBs for the $i$-th target, $i \in \left\{ {1,2} \right\}$. T1 and T2 stand for Target 1 and Target 2. In the preliminary sensing phase, the DFBS exhibits suboptimal sensing accuracy for T1, while demonstrating superior accuracy for T2. $R_0$ represents the minimum communication rate requirement for each user. \emph{Improved RSMA} (I-RSMA) refers to the RSMA with common rate splitting optimization.

The convergence performance of the proposed sensing enhancement scheme for the multi-target multi-user ISAC system is illustrated by the solid curve in Figure \ref{figure2}, with the traversal search scheme shown as a dashed curve for comparison. Specifically, under configurations where the DFBS is equipped with $N_{\rm t} = 2,4,8$ transmit antennas, the proposed algorithm achieves convergence within 16, 20, and 30 iterations, respectively, demonstrating superior convergence efficiency and effectiveness. In contrast, the traversal scheme requires 23, 30, and 40 iterations under the same configurations. Taking $N_{\rm t} =8$ as an example, while the traversal scheme converges within 40 iterations, Algorithm \ref{alg:reference label} achieves convergence in only 30 iterations. This acceleration is accompanied by a significant improvement in sensing performance, where the $\rm CRB_{\phi 1}$ decreases from -25 dB to -32 dB. These enhancements stem from the proposed first-order Taylor series expansion and SCA, which effectively transform the original non-convex optimization problem into a convex formulation, thereby simplifying the solution to P3. Furthermore, the reduced iteration count suggests that the proposed balanced allocation-based initialization already resides near the neighborhood of the optimal solution. On the other hand, for a fixed number of iterations, increasing the number of transmit antennas progressively improves the quality of echo signals received at the DFBS. This improvement arises because a larger antenna array provides higher spatial degrees of freedom, enabling more precise and effective BF. However, the increased computational complexity associated with larger $N_{\rm t}$ manifests as a gradual rise in the required iterations for convergence.

Figure \ref{figure3} displays the significant divergence in $\rm CRB_{\phi 1}$ across MA schemes for different channel state uncertainty parameters. RSMA exhibits the strongest robustness, leveraging its rate-splitting mechanism to suppress error impacts quadratically (CRB increases by only 3.50 dB at $\sigma^2 = 0.30$). SDMA mitigates error risks through spatial multiplexing, showing the slowest CRB growth, making it suitable for large-scale antenna systems. OMA displays linear sensitivity to channel uncertainty, with CRB rising by 0.50 dB per 0.10  increase in $\sigma^2$, due to its orthogonal resource isolation. Conversely, NOMA suffers from error amplification caused by superposition coding, leading to severe CRB degradation when ${\sigma ^2} > 0.15$. Besides, I-RSMA scheme has an additional 3.89 dB sensing performance gain compared to RSMA.
For engineering practice, RSMA should be prioritized in high-error scenarios (${\sigma ^2} > 0.15$), while SDMA’s spatial gains can be leveraged in low-to-moderate uncertainty regimes. NOMA requires stringent channel estimation accuracy (${\sigma ^2} < 0.10$). These findings provide a theoretical foundation for the adaptive selection of MA schemes in dynamic channel environments.

Figure \ref{figure4_dynamic} comprehensively demonstrates the impact of different MA schemes and transmit power levels on the sensing CRB.
Figure \ref{figure4a} provides the variation trend of the $\rm CRB_{\phi 1}$ under four MA schemes RSMA, NOMA, SDMA, and OMA, as the transmit power $P_{\rm T}$ increases from 0 dBm to 30 dBm. $\rm CRB_{\phi 1}$ of RSMA decreases from -23.69 dB to -53.62 dB, achieving a reduction of  29.93 dB, which demonstrates the best performance. SDMA exhibits a slightly inferior improvement, with its $\rm CRB_{\phi 1}$ decreasing from -18.55 dB to  -47.59 dB (reduction: 29.04 dB). NOMA shows a $\rm CRB_{\phi 1}$ reduction from -16.31 dB to -46.31 dB (reduction: 30.00 dB), while OMA performs the worst, decreasing from -10.49 dB to -39.51 dB (reduction: 29.02 dB). The superior estimation accuracy of RSMA and SDMA originates from their efficient beamforming designs: RSMA optimizes the directionality of sensing signals through the synergistic interaction of common and private streams, whereas SDMA enhances target resolution via spatial separation. The relatively weaker performance of NOMA stems from the additional interference introduced by power-domain multiplexing, which degrades sensing gains. OMA exhibits the poorest performance due to its orthogonal resource allocation, which disperses sensing power across multiple time-frequency resources, thereby reducing the power available for individual target sensing. Furthermore, the proposed I-RSMA scheme, which enhances sensing performance by optimizing the common rate allocation, achieves an additional 3.96 dB $\rm CRB_{\phi 1}$ improvement compared to conventional RSMA. This gain arises because the communication rate constraints require each user’s rate to exceed a threshold. By flexibly adjusting the common rate allocation, I-RSMA satisfies these constraints with less power, thereby reserving more power for target sensing. This mechanism validates the effectiveness of the proposed I-RSMA scheme in harmonizing sensing and communication functionalities.

Figure \ref{figure4b} depicts the variation of the $\rm CRB_{\phi 2}$ with $P_{\rm T}$, highlighting the impact of fair power allocation on sensing performance. Under the equal power allocation, RSMA, SDMA, NOMA, and OMA achieve $\rm CRB_{\phi 2}$ improvements of 29.93 dB, 29.04 dB, 30.00 dB, and 29.02 dB, respectively. However, the equal power allocation leads to significant disparities between $\rm CRB_{\phi 1}$ and $\rm CRB_{\phi 2}$, as the DFBS cannot adaptively allocate sensing power based on the distinct characteristics of each target to ensure fair multi-target sensing. RSMA and SDMA optimize the SCNR through BF, whereas NOMA is constrained by interference, and OMA suffers from resource dispersion, resulting in inferior performance. 
Under the fair power allocation, when the DFBS reduces the sensing power allocated to T2, the $\rm CRB_{\phi 2}$ for RSMA, SDMA, NOMA, and OMA increases by 3.93 dB, 4.04 dB, 4.50 dB, and 5.02 dB, respectively. RSMA and SDMA exhibit the best robustness, as their beamforming focus maintains partial sensing capability even at reduced power levels. In contrast, OMA’s performance degrades more severely due to its inherent resource fragmentation, making it highly sensitive to power reduction. Remarkably, the proposed I-RSMA scheme achieves an additional 5.16 dB sensing performance gain compared to conventional RSMA. This improvement stems from its adaptive common rate allocation, which optimally balances communication rate constraints and sensing power utilization, further validating the superiority of I-RSMA.

Figure \ref{figure4c} gives $\rm CRB_{p 1}$ as a function of $P_{\rm T}$, demonstrating the performance of the four schemes in sensing range-related parameters. Under the equal power allocation, RSMA, SDMA, NOMA, and OMA achieve $\rm CRB_{p 1}$ improvements of 29.81 dB, 29.15 dB, 29.02 dB, and 29.03 dB, respectively. When fair allocation is applied, where T1 receives additional sensing power, $\rm CRB_{p 1}$ is significantly reduced. Specifically, $\rm CRB_{p 1}$ decreases by 4.53 dB, 4.18 dB, 3.69 dB, and 3.43 dB for RSMA, SDMA, NOMA, and OMA, respectively. Notably, the proposed I-RSMA scheme attains a further 3.25 dB sensing performance gain compared to conventional RSMA. This enhancement arises from its fair power reallocation mechanism, which prioritizes critical sensing tasks while maintaining communication rate guarantees, thereby optimizing the ISAC tradeoff under the limited power.

Figure \ref{figure4d} presents the variation of $\rm CRB_{p 2}$ with$P_{\rm T}$, reflecting the performance of four schemes in sensing range-related parameters. Under the equal power allocation, RSMA, SDMA, NOMA, and OMA achieve $\rm CRB_{p 2}$ improvements of 29.92 dB, 29.03 dB, 30.02 dB, and 29.05 dB, respectively. However, when fair allocation is applied, where T2 sacrifices a portion of its sensing power, $\rm CRB_{p 2}$ increases by 3.72 dB, 3.81 dB, 4.38 dB, and 4.63 dB for RSMA, SDMA, NOMA, and OMA, respectively. Remarkably, the proposed I-RSMA scheme achieves a 3.21 dB sensing performance gain compared to conventional RSMA under the fair regime.

Figure \ref{figure5_trade} demonstrates the Pareto boundary on sensing and communication under different MA schemes.
Figure \ref{figure5a} illustrates the tradeoff between $\rm CRB_{\phi 1}$ and $R_0$, revealing the impact of power allocation on balancing sensing and communication performance. Notably, for RSMA, SDMA, NOMA, and OMA, the achieved $\rm CRB_{\phi 1}$ remains nearly constant when $R_0$ is below 4.28 bps/Hz, 4.49 bps/Hz, 3.41 bps/Hz, and 1.32 bps/Hz, named rate thresholds, respectively. This phenomenon arises because the rate constraint becomes inactive within these ranges, allowing the system to simultaneously minimize $\rm CRB_{\phi 1}$ and achieve satisfactory communication rates by exploiting the correlation between communication and sensing channels.
Under the equal power allocation, $\rm CRB_{\phi 1}$ at these rate thresholds is -61.72 dB (RSMA), -55.70 dB (SDMA), -56.97 dB (NOMA), and -50.68 dB (OMA). In contrast, the fair power allocation significantly enhances sensing performance, yielding $\rm CRB_{\phi 1}$ of -74.13 dB (RSMA), -67.75 dB (SDMA), -69.62 dB (NOMA), and -59.43 dB (OMA).
Figure \ref{figure5b} further analyzes this tradeoff for T2. Under the equal allocation, $\rm CRB_{\phi 1}$ at rate thresholds is -72.15 dB (RSMA), -67.32 dB (SDMA), -70.35 dB (NOMA), and -63.48 dB (OMA). With the fair allocation, these values shift to -65.73 dB (RSMA), -55.64 dB (SDMA), -58.73 dB (NOMA), and -50.68 dB (OMA).
Figure \ref{figure5c} shows the tradeoff between $\rm CRB_{p 1}$ and $R_0$. Under the equal power allocation, the $\rm CRB_{p 1}$ at rate thresholds are -177.36 dB (RSMA), -168.72 dB (SDMA), -171.73 dB (NOMA), and -164.92 dB (OMA). In contrast, the fair power allocation significantly enhances sensing accuracy, yielding $\rm CRB_{p 1}$ of -190.17 dB (RSMA), -180.55 dB (SDMA), -183.02 dB (NOMA), and -177.85 dB (OMA).
Figure \ref{figure5d} further analyzes this tradeoff for T2. Under the equal allocation, $\rm CRB_{p 1}$ at the rate thresholds are -190.76 dB (RSMA), -183.67 dB (SDMA), -185.46 dB (NOMA), and -178.25 dB (OMA). With the fair allocation, these values shift to -180.33 dB (RSMA), -170.72 dB (SDMA), -176.85 dB (NOMA), and -170.92 dB (OMA). Similarly, the proposed fair power allocation scheme achieves more balanced multi-target sensing performance, as evidenced by the reduced disparity between $\rm CRB_{p 1}$ and $\rm CRB_{p 2}$ compared to the equal one.

Figure \ref{figure6_trade} shows the beampattern of the first user’s private stream as an example to visually illustrates the difference between fairness-aware sensing and equality-aware sensing. In Figures \ref{figure6a} and \ref{figure6b}, the beampattern gain directed toward T1 is significantly higher than that for T2. This disparity arises from our fairness-aware BF design, which fairly prioritizes sensing resources based on initial estimation accuracy. Specifically, when the DFBS detects inferior estimation accuracy for T1 compared to T2, it allocates more sensing power to T1, and enhances its beampattern gain, while reducing power and beampattern gain to T2. This adaptive strategy ensures the balanced and fair sensing performance.
In stark contrast, Figures \ref{figure6c} and \ref{figure6d} demonstrate nearly identical beampattern gains for T1 and T2 under a equality-aware BF scheme, which fails to adjust beampattern gains based on target-specific characteristics, resulting in imbalanced sensing outcomes. These observations align consistently with the simulation results in Figures \ref{figure4a} – \ref{figure4d} and Figures \ref{figure5a} – \ref{figure5d}, validating robustness of the proposed fairness-aware BF scheme.

Figure \ref{figure7} illustrates the impacts of fairness-aware and equality-aware BF schemes on CRB. As evident in Figures \ref{figure7a} and \ref{figure7b}, under various MA scenarios, the disparities between $\rm CRB_{\phi 1}$ and $\rm CRB_{\phi 2}$ and between $\rm CRB_{p 1}$ and $\rm CRB_{p 2}$ achieved by fairness-aware BF is significantly smaller than those obtained via equality-aware BF. This demonstrates that the DFBS, through fairness-aware BF, achieves the  more balanced multi-target sensing performance with a limited power budget. These results align with those presented in Figures \ref{figure4a} – \ref{figure4d}, further validating the significance of the proposed fairness-aware BF in enabling fair sensing across heterogeneous targets.

\section{Conclusion}
This paper presented a novel fairness-aware RSMA-based sensing framework for ISAC systems to achieve enhanced sensing accuracy and communication quality. By jointly optimizing BF vectors, common rate splitting strategies, and sensing power allocation, the proposed scheme effectively optimized the CRB and communication rates. Analytical derivations of the CRB and communication rates revealed fundamental tradeoffs between sensing and communication functionalities. To address the non-convex nature of the optimization problem, the original formulation was transformed into a Pareto-optimal problem. A computationally efficient algorithm was developed by leveraging Taylor series expansion, SDR, SCA, and a penalty function approach. Simulations demonstrated that the proposed fairness-aware RSMA-based sensing scheme significantly outperformed NOMA, SDMA, and OMA in three critical aspects: 1) achieving up to 29.14\% lower CRB for target parameter estimation; 2) exhibiting superior flexibility in balancing sensing-communication performance tradeoffs; 3) enabling more accurate BF for simultaneous multi-target sensing and multi-user communication provisioning. These advancements highlighted the potential of RSMA as a key enabler for the next-generation ISAC systems requiring effectively fair resource coordination between sensing and communication. 
 
 \begin{appendices}
 \section{Proof of Theorem 1}
Given the non-convex nature of the optimization objective, we transform the left part of (\ref{P1_supp}b) into the following form
\begin{equation}\label{P1_a}
	{{\bf{C}}_{{\rm{CRB,}}i,i}} = {\bf{a}}_i^{\rm{T}}{\bf{F}}_{\bf{b}}^{ - 1}{{\bf{a}}_i} \, \text{with}\, i \in \left\{ {1,2, \cdots ,3K} \right\},
\end{equation}
where ${{\bf{a}}_i}$ denotes the $i$-th column vector of the identity matrix. Then, a set of auxiliary variables $\wp  = \left\{ {{v_1},{v_2}, \cdots ,{v_{3K}}} \right\}$ are introduced, of which each element satisfies 
\begin{equation}\label{P1_a2}
	{v_i} \ge {\bf{a}}_i^{\rm{T}}{\bf{F}}_{\bf{b}}^{ - 1}{{\bf{a}}_i}.
\end{equation}
Based on (\ref{P1_a2}) as well as Schur complement, there exists \cite{MT_ISAC1}
\begin{equation}\label{P1_a3}
	\left[ {\begin{array}{*{20}{c}}
			{{{\bf{F}}_{\bf{b}}}}&{{{\bf{a}}_i}}\\
			{{\bf{a}}_i^{\rm{T}}}&{{v_i}}
	\end{array}} \right] \ge \left( {\begin{array}{*{20}{c}}
			{{0_{1,1}}}& \ldots &{{0_{1,3K + 1}}}\\
			\vdots & \ddots & \vdots \\
			{{0_{3K + 1,1}}}& \cdots &{{0_{3K + 1,3K + 1}}}
	\end{array}} \right).
\end{equation}

Let ${{\bf{R}}_{\rm{c}}} = {{\bf{r}}_{\rm{c}}}{\bf{r}}_{\rm{c}}^{\rm{H}}$, ${{\bf{U}}_{\rm{c}}} = {{\bf{u}}_{\rm{c}}}{\bf{u}}{}_{\rm{c}}^{\rm{H}} \ge {\bf{0}}$, ${{\bf{U}}_m} = {{\bf{u}}_m}{\bf{u}}{}_m^{\rm{H}}\ge {\bf{0}}$, ${{\bf{H}}_i} = {{\bf{h}}_i}{\bf{h}}_i^{\rm{H}}$, and ${\mathop{\rm rank}\nolimits} \left( {{{\bf{U}}_{\rm{c}}}} \right) = {\mathop{\rm rank}\nolimits} \left( {{{\bf{U}}_m}} \right) = 1$. 
For non-convex constraints (\ref{P1_supp}d), (\ref{P1_supp}e), and (\ref{P1_supp}f), it holds that
\begin{equation}\label{P1_b1}
	\begin{array}{l} \displaystyle
		{\mathop{\rm Tr}\nolimits} \left( {{{\bf{H}}_m}{{\bf{U}}_{\rm{c}}}} \right) - \left( {{e^{\sum\nolimits_{m = 1}^M {{r_{{\rm{c}},m}}} }} - 1} \right)\sum\nolimits_{j = 1}^M {{\mathop{\rm Tr}\nolimits} \left( {{{\bf{H}}_m}{{\bf{U}}_j}} \right)} \\ \displaystyle \qquad\qquad \qquad\qquad\qquad \qquad
		\ge \left( {{e^{\sum\nolimits_{m = 1}^M {{r_{{\rm{c}},m}}} }} - 1} \right){\sigma ^2},
	\end{array}
\end{equation}
\begin{equation}\label{P1_c1}
	{\mathop{\rm Tr}\nolimits} \left( {{{\bf{H}}_m}{{\bf{U}}_{\rm{c}}}} \right) - \left( {{2^{{I_{\rm{c}}}}} - 1} \right)\sum\nolimits_{j = 1}^M {{\mathop{\rm Tr}\nolimits} \left( {{{\bf{H}}_m}{{\bf{U}}_j}} \right)}  \ge \left( {{2^{{I_{\rm{c}}}}} - 1} \right){\sigma ^2},
\end{equation}
and
\begin{equation}\label{P1_d1}
	{\mathop{\rm Tr}\nolimits} \left( {{{\bf{H}}_m}{{\bf{U}}_m}} \right) - \left( {{2^{{I_{\rm{p}}}}} - 1} \right)\sum\nolimits_{j \ne m} {{\mathop{\rm Tr}\nolimits} \left( {{{\bf{H}}_m}{{\bf{U}}_j}} \right)}  \ge \left( {{2^{{I_{\rm{p}}}}} - 1} \right){\sigma ^2}.
\end{equation}
Let ${\bf{c}} = \left\{ {{c_{1,m}},{c_{2,m}},{c_{3,m}},{c_{4,m}},{c_{5,m}},{c_{6,m}}} \right\}$ be a set of auxiliary variables. Then, (\ref{P1_b1}), (\ref{P1_c1}), and (\ref{P1_d1}) can be rewritten as 
\begin{subequations} \label{P1_b2}
	\begin{align}
		&{\mathop{\rm Tr}\nolimits} \left( {{{\bf{H}}_m}{{\bf{U}}_{\rm c}}} \right) \ge {c_{1,m}}{c_{2,m}}\\
		&{c_{1,m}} \ge {{2^{\sum\nolimits_{j = 1}^M {{r_{{\rm{c}},j}}} }} - 1},\\ 
		&{c_{2,m}} \ge \sum\nolimits_{j= 1}^M {{\mathop{\rm Tr}\nolimits} \left( {{{\bf{H}}_m}{{\bf{U}}_j}} \right)}  + {\sigma ^2},
	\end{align}	
\end{subequations}
\begin{subequations} \label{P1_c2}
	\begin{align}
		&{\mathop{\rm Tr}\nolimits} \left( {{{\bf{H}}_m}{{\bf{U}}_{\rm c}}} \right) \ge {c_{3,m}}{c_{4,m}}\\
		&{c_{3,m}} \ge {2^{{I_{\rm{c}}}}} - 1,\\ 
		&{c_{4,m}} \ge \sum\nolimits_{j= 1}^M {{\mathop{\rm Tr}\nolimits} \left( {{{\bf{H}}_m}{{\bf{U}}_j}} \right)}  + {\sigma ^2},
	\end{align}	
\end{subequations}
\begin{subequations} \label{P1_d2}
	\begin{align}
		&{\mathop{\rm Tr}\nolimits} \left( {{{\bf{H}}_m}{{\bf{U}}_m}} \right) \ge {c_{5,m}}{c_{6,m}}\\
		&{c_{5,m}} \ge {2^{{I_{\rm{p}}}}} - 1,\\ 
		&{c_{6,m}} \ge \sum\nolimits_{j \ne m} {{\mathop{\rm Tr}\nolimits} \left( {{{\bf{H}}_m}{{\bf{U}}_j}} \right)}  + {\sigma ^2}.
	\end{align}	
\end{subequations}
Afterwards, by applying the phase rotation at (\ref{P1_b2}a), (\ref{P1_c2}a), and (\ref{P1_d2}a), we obtain
\begin{equation}\label{P1_b3}
	\sqrt {{\rm{Tr}}\left[ {{\rm{Ro}}\left( {{{\bf{H}}_m}{{\bf{U}}_{\rm{c}}}} \right)} \right]}  \ge \sqrt {{c_{1,m}}{c_{2,m}}} ,
\end{equation}
\begin{equation}\label{P1_c3}
	\sqrt {{\rm{Tr}}\left[ {{\rm{Ro}}\left( {{{\bf{H}}_m}{{\bf{U}}_{\rm{c}}}} \right)} \right]}  \ge \sqrt {{c_{3,m}}{c_{4,m}}} ,
\end{equation}
\begin{equation}\label{P1_d3}
	\sqrt {{\mathop{\rm Tr}\nolimits} \left[ {{\rm{Ro}}\left( {{{\bf{H}}_m}{{\bf{U}}_m}} \right)} \right]}  \ge \sqrt {{c_{5,m}}{c_{6,m}}} .
\end{equation}
With the first-order Taylor series expansion employed for the right parts of (\ref{P1_b3}), (\ref{P1_c3}), and (\ref{P1_d3}),  the convex forms of the original non-convex constraints (\ref{P1_supp}d), (\ref{P1_supp}e), and (\ref{P1_supp}f) are presented as
\begin{equation}\label{P1_b4}
	\begin{array}{*{20}{l}} \displaystyle
		{\sqrt {{\rm{Tr}}\left[ {{\rm{Ro}}\left( {{{\bf{H}}_m}{{\bf{U}}_{\rm{c}}}} \right)} \right]}  \ge \sqrt {{c_{1,m,0}}{c_{2,m,0}}} }\\ \displaystyle
		{ + \frac{{\left( {{c_{1,m}} - {c_{1,m,0}}} \right)\sqrt {{c_{1,m,0}}c_{2,m,0}^{ - 1}} }}{2}}\\ \displaystyle
		{ + \frac{{\left( {{c_{2,m}} - {c_{2,m,0}}} \right)\sqrt {c_{1,m,0}^{ - 1}{c_{2,m,0}}} }}{2} +o\left( n \right),}
	\end{array}
\end{equation}
\begin{equation}\label{P1_c4} 
	\begin{array}{*{20}{l}} \displaystyle
		{\sqrt {{\rm{Tr}}\left[ {{\rm{Ro}}\left( {{{\bf{H}}_m}{{\bf{U}}_{\rm{c}}}} \right)} \right]}  \ge \sqrt {{c_{3,m,0}}{c_{4,m,0}}} }\\ \displaystyle
		{ + \frac{{\left( {{c_{3,m}} - {c_{3,m,0}}} \right)\sqrt {{c_{3,m,0}}c_{4,m,0}^{ - 1}} }}{2}}\\ \displaystyle
		{ + \frac{{\left( {{c_{4,m}} - {c_{4,m,0}}} \right)\sqrt {c_{3,m,0}^{ - 1}{c_{4,m,0}}} }}{2} + o\left( n \right),}
	\end{array}
\end{equation}
and
\begin{equation}\label{P1_d4}
	\begin{array}{*{20}{l}}  \displaystyle
		{\sqrt {{\rm{Tr}}\left[ {{\rm{Ro}}\left( {{{\bf{H}}_m}{{\bf{U}}_m}} \right)} \right]}  \ge \sqrt {{c_{5,m,0}}{c_{6,m,0}}} }\\  \displaystyle
		{ + \frac{{\left( {{c_{5,m}} - {c_{5,m,0}}} \right)\sqrt {{c_{5,m,0}}c_{6,m,0}^{ - 1}} }}{2}{\rm{ }}}\\  \displaystyle
		{ + \frac{{\left( {{c_{6,m}} - {c_{6,m,0}}} \right)\sqrt {c_{5,m,0}^{ - 1}{c_{6,m,0}}} }}{2} + o\left( n \right),}
	\end{array}
\end{equation}
where $c_{i,m,0}$ is the specific value of the corresponding auxiliary variable $c_{i,m}$, $i \in \left\{ {1,2,3,4,5,6} \right\}$.
Let ${o}\left( n \right)$ denote the higher-order terms in the Taylor series expansion. For analytical tractability, these terms are omitted in subsequent derivations. Formally, the approximation retains only the first-order components.
With the users' CSI uncertainty provided in (\ref{ICSI_1}) - (\ref{ICSI_4}) taken into account, (\ref{P1_b4}), (\ref{P1_c4}), and (\ref{P1_d4}) are further given by
\begin{equation}\label{P1_b5_1}
	\begin{array}{*{20}{l}} \displaystyle
		{\sqrt {{\rm{minTr}}\left[ {{\rm{Ro}}\left( {{{\bf{H}}_m}{{\bf{U}}_{\rm{c}}}} \right)} \right]}  \ge \sqrt {{c_{1,m,0}}{c_{2,m,0}}} }\\ \displaystyle
		{ + \frac{{\left( {{c_{1,m}} - {c_{1,m,0}}} \right)\sqrt {{c_{1,m,0}}c_{2,m,0}^{ - 1}} }}{2}}\\ \displaystyle
		{ + \frac{{\left( {{c_{2,m}} - {c_{2,m,0}}} \right)\sqrt {c_{1,m,0}^{ - 1}{c_{2,m,0}}} }}{2} +o\left( n \right),}
	\end{array}
\end{equation}
\begin{equation}\label{P1_c5_1}
	\begin{array}{*{20}{l}} \displaystyle
		{\sqrt {{\rm{minTr}}\left[ {{\rm{Ro}}\left( {{{\bf{H}}_m}{{\bf{U}}_{\rm{c}}}} \right)} \right]}  \ge \sqrt {{c_{3,m,0}}{c_{4,m,0}}} }\\ \displaystyle
		{ + \frac{{\left( {{c_{3,m}} - {c_{3,m,0}}} \right)\sqrt {{c_{3,m,0}}c_{4,m,0}^{ - 1}} }}{2}}\\ \displaystyle
		{ + \frac{{\left( {{c_{4,m}} - {c_{4,m,0}}} \right)\sqrt {c_{3,m,0}^{ - 1}{c_{4,m,0}}} }}{2} + o\left( n \right),}
	\end{array}
\end{equation}
and
\begin{equation}\label{P1_d5_1}
	\begin{array}{*{20}{l}}  \displaystyle
		{\sqrt {{\rm{minTr}}\left[ {{\rm{Ro}}\left( {{{\bf{H}}_m}{{\bf{U}}_m}} \right)} \right]}  \ge \sqrt {{c_{5,m,0}}{c_{6,m,0}}} }\\  \displaystyle
		{ + \frac{{\left( {{c_{5,m}} - {c_{5,m,0}}} \right)\sqrt {{c_{5,m,0}}c_{6,m,0}^{ - 1}} }}{2}{\rm{ }}}\\  \displaystyle
		{ + \frac{{\left( {{c_{6,m}} - {c_{6,m,0}}} \right)\sqrt {c_{5,m,0}^{ - 1}{c_{6,m,0}}} }}{2} + o\left( n \right),}
	\end{array}
\end{equation}
For the constraint (\ref{P1}f), let ${\gamma _{\min }} = \min \left\{ {{\gamma _i},{\gamma _2}, \cdots ,{\gamma _K}} \right\}$ and ${\gamma _{\max }} = \max \left\{ {{\gamma _i},{\gamma _2}, \cdots ,{\gamma _K}} \right\}$, which leads to ${\gamma _{\max }} - {\gamma _{\min }} \le \rho_0 $ which implies that ${\gamma _{\max }} \le \rho_0  + {\gamma _{\min }}$. Hence, we get
\begin{equation}\label{gamma_3}
	{\gamma _{\min }} \le {\gamma _i} \le {\gamma _{\min }} + \rho_0 ,i \in \left\{ {1,2, \cdots ,K} \right\}.
\end{equation}
By applying (\ref{gamma_echo}) in (\ref{gamma_3}), we obtain 
\begin{equation}\label{gamma_4}
	\displaystyle \frac{{{\gamma _{\min }}}}{{{\gamma _{o,k}}}} \le {o_k} \le \frac{{\left( {{\gamma _{\min }} + {\rho _0}} \right)}}{{{\gamma _{o,k}}}}.
\end{equation}
In (\ref{gamma_4}), the parameter $\rho_0$ determines the upper bound of the achievable sensing performance improvement margin for the $k$-th target beyond its basic sensing requirements, while serving as a critical tradeoff parameter between sensing and communication. This parameter can be dynamically adjusted in real time to adopt to network conditions. For instance, during network congestion, $\rho_0$ is reduced to relax sensing demands, thereby prioritizing communication reliability.
Therefore, we obtain the rewritten optimization problem P3 shown in (\ref{P2}), implying that the Theorem \ref{theorem_1} gets proved.
 \end{appendices}

\bibliographystyle{IEEEtran}
\bibliography{ref_MTMU_RSMA_ISAC}

\end{document}